\definecolor{bluish-green}{HTML}{009E73}
\theoremstyle{plain}%
\newtheorem{theorem}{Theorem}[section]
\newtheorem{lemma}[theorem]{Lemma}
\newtheorem{proposition}[theorem]{Proposition}
\newtheorem*{defn:unnumbered}{Definition}
\numberwithin{equation}{section}
\newcommand{\nth}[1]{\ensuremath{#1^{\,\mathrm{th}}}}
\renewcommand{\th}{\ensuremath{\hphantom{}^{\,\mathrm{th}}}\xspace}
\newcommand{\HLinkShort}[2]{\hyperref[#2]{#1\ref*{#2}}}
\newcommand{\HLink}[2]{\hyperref[#2]{#1~\ref*{#2}}}
\newcommand{\HLinkPage}[2]{\hyperref[#2]{#1~\ref*{#2}%
      $_\text{p\pageref{#2}}$}}
\newcommand{\HLinkPageOnly}[1]{\hyperref[#1]{Page~\refpage*{#1}%
      $_\text{p\pageref{#1}}$}}
\newcommand{\HLinkSuffix}[3]{\hyperref[#2]{#1\ref*{#2}{#3}}}
\newcommand{\HLinkPageSuffix}[3]{\hyperref[#2]{#1\ref*{#2}%
      #3$_\text{p\pageref{#2}}$}}
\newcommand{\seclab}[1]{\label{sec:#1}}
\newcommand{\secref}[1]{\HLink{Section}{sec:#1}}
\newcommand{\lemlab}[1]{\label{lemma:#1}}
\newcommand{\lemref}[1]{\HLink{Lemma}{lemma:#1}}%
\newcommand{\tablab}[1]{\label{table:#1}}%
\newcommand{\tabref}[1]{\HLink{Table}{table:#1}}%
\newcommand{\proplab}[1]{\label{prop:#1}}
\newcommand{\propref}[1]{Proposition~\ref{prop:#1}}
\newcommand{\thmlab}[1]{{\label{theo:#1}}}
\newcommand{\thmref}[1]{\HLink{Theorem}{theo:#1}}
\providecommand{\eqlab}[1]{}%
\renewcommand{\eqlab}[1]{\label{equation:#1}}
\newcommand{\Eqref}[1]{\HLinkSuffix{Eq.~(}{equation:#1}{)}}
\newcommand{\wt}{\widetilde}%
\newcommand{\polylog}{\mathrm{polylog}}
\definecolor{blue25}{rgb}{0, 0, 11}
\newcommand{\remove}[1]{}
\newcommand{\M}{\boldsymbol{M}}
\newcommand{\A}{\boldsymbol{A}}
\newcommand{\B}{\boldsymbol{B}}
\newcommand{\X}{\boldsymbol{X}}
\newcommand{\Y}{\boldsymbol{Y}}
\newcommand{\U}{\boldsymbol{U}}
\newcommand{\V}{\boldsymbol{V}}
\newcommand{\Z}{\boldsymbol{Z}}
\newcommand{\I}{\boldsymbol{I}}
\newcommand{\G}{\boldsymbol{G}}
\renewcommand{\H}{\boldsymbol{H}}
\newcommand{\T}{\mathrm{T}}
\renewcommand{\u}{\boldsymbol{u}}
\renewcommand{\v}{\boldsymbol{v}}
\newcommand{\w}{\boldsymbol{w}}
\newcommand{\x}{\boldsymbol{x}}
\newcommand{\y}{\boldsymbol{y}}
\newcommand{\e}{\boldsymbol{e}}
\renewcommand{\d}{\boldsymbol{d}}
\renewcommand{\c}{\boldsymbol{c}}
\renewcommand{\b}{\boldsymbol{b}}
\renewcommand{\a}{\boldsymbol{a}}
\newcommand{\bsig}{\boldsymbol{\Sigma}}
\newcommand{\F}{\mathbb{F}}
\newcommand{\Q}{\mathbb{Q}}
\newcommand{\R}{\mathbb{R}}
\newcommand{\uMv}{\mathsf{u^TMv}}
\let\c@table\c@figure
\renewcommand{\epsilon}{\varepsilon}
\begin{document}

\title{Vector-Matrix-Vector Queries for Solving Linear Algebra, Statistics, and Graph Problems}

\date{\today}%

\author{%
   Cyrus Rashtchian%
   \thanks{Department of Computer Science \& Engineering, UC San Diego. %
      \url{crashtchian@eng.ucsd.edu} }%
   \and %
   David P. Woodruff%
   \thanks{Computer Science Department, Carnegie Mellon University.  
      \url{dwoodruf@cs.cmu.edu} }%
  \and %
  Hanlin Zhu%
  \thanks{Institute for Interdisciplinary Information Sciences, Tsinghua University. %
  	\url{zhuhl17@mails.tsinghua.edu.cn}}%
}

\date{\today}%

\maketitle

\begin{abstract}  
We consider the general problem of learning about a matrix through vector-matrix-vector queries. These queries provide the value of $\u^{\T}\M\v$ over a fixed field $\F$ for a specified pair of vectors $\u,\v \in \F^n$. To motivate these queries, we observe that they generalize many previously studied models, such as independent set queries, cut queries, and standard graph queries. They also specialize the recently studied matrix-vector query model. Our work is exploratory and broad, and we provide new upper and lower bounds for a wide variety of problems, spanning linear algebra, statistics, and graphs. Many of our results are nearly tight, and we use diverse techniques from linear algebra, randomized algorithms, and communication complexity.
\end{abstract}


\section{Introduction}

In the past few decades, there has been a significant amount of research on query-based algorithms, motivated by compressed sensing, streaming, sketching, distributed methods, graph parameter estimation, and property testing~\cite{canonne2015survey, eldar2012compressed, goldreich2017introduction, wang2015sublinear, woodruff2014sketching}. Most of this work focuses on local queries that only access a small portion of the unknown data at a time. For example, prior work on graph parameter estimation has considered \emph{degree} queries (which output the degree of a vertex $v$), \emph{edge existence} queries (which answer whether a pair $\{u,v\}$ forms an edge), and \emph{neighbor} queries (which provide the $i$\th neighbor of a vertex $v$). Not surprisingly, such queries have limited utility for certain problems. Even estimating the number of edges in a graph is known to require a polynomial number of edge existence, degree, and neighbor queries~\cite{f-sirvu-06, gr-aapg-08}. 

This has led researchers to consider queries that still reveal a small amount of information, while being more global in nature. For example, {\em bipartite independent set} queries (which indicate whether or not there is at least one edge between two disjoint sets of vertices) can be used to estimate the number of edges with only $\polylog(n)$ queries~\cite{bhrrs-eeiso-18, dlm-acssw-20}. Similarly, {\em cut} queries (which provide the number of edges crossing a graph cut) can be used to find the exact minimum cut in a graph~\cite{rubinstein2018computing, mukhopadhyay2019weighted}. Augmenting edge existence, degree, and neighbor queries with access to an edge sampling oracle (which provides a uniformly random edge) leads to elegant algorithms for estimating the number of certain subgraphs, e.g., triangles or cliques~\cite{assadi2018simple}, which was a major open problem (without edge sampling) up until a few years ago~\cite{elrs-actst-17, s-ssaat-15}. 

As the diversity of queries increases along with the range of applicable problems, it is natural to wonder whether there is a more general framework for understanding the power and limitations of query-based algorithms. In this work, we initiate the study of querying a matrix through bilinear forms, which generalizes the above mentioned queries and several more, sometimes with an $O(\log n)$ factor overhead. Formally, let $\M$ be an $n \times n$ matrix over a field $\F$. We consider vector-matrix-vector queries, which we call $\uMv$ queries for short. Given a pair of vectors $\u,\v \in \F^n$, these queries return the value of $\u^{\T}\M\v$ over $\F$. For graph applications, we often let the matrix $\M$ be the adjacency matrix of a graph. We later explain how to simulate standard graph queries with $\uMv$ queries. Allowing $\M$ to take values in other fields enables us to consider a greater variety of linear algebra, statistics, and data analytic problems.
The underlying field $\F$ will play an important role in our results, where working over $\F_2$ or $\R$ will change the query complexity of certain problems. We assume that the entries have $O(\log n)$ bit-complexity, and therefore, the output of one $\uMv$ query provides only $O(\log n)$ bits of information. We strive for algorithms using a subquadratic number of queries, which allows us to solve the problem without trivially learning the whole matrix. Unless we specify otherwise, we allow the queries to be randomized and adaptive.

From a practical point of view, algorithms based on $\uMv$ queries would most likely be useful in the context of specialized hardware or distributed environments. Computing a query only requires a weighted sum of entries of $\M$, and hence, it would be easy to execute in a massively parallel fashion. For example, if each processor handled a single row, then the local memory would be bounded by $O(n \log n)$ for storing $\u$ and $\v$. In a shared-nothing system, the number of communication rounds would be proportional to the number of queries. Similarly, in a streaming environment where single entries of $\M$ are changed at each step, the memory would be $O(\log n)$ times the number of queries. Working over a finite field $\F$ would reduce the memory overhead to $O(\log |\F|)$.

That being said, our focus is on the theoretical aspects of the $\uMv$ query model. We consider many problems, spanning linear algebra, statistics, and graph properties. Part of our motivation comes from finding algorithms that are query-efficient in the $\uMv$ model, while surpassing lower bounds for more restricted models. For example, we consider properties that depend on the whole matrix (e.g., having low rank, being unitary or doubly stochastic) or the entire graph (e.g., being a perfect matching or a star). As these are global properties, it is intuitively challenging to verify them using local queries without simply learning the whole matrix or graph. Overall, the $\uMv$ query model opens up many theoretical directions, and it facilitates new connections between linear algebra, randomized algorithms, and communication complexity.

We first provide an overview of the relationship between $\uMv$ queries and previously studied models. Then, we describe our results.

\subsection{Related work and other queries}
\seclab{related}

The $\uMv$ model provides a unifying lens and generalizes many previously studied queries.  

\begin{itemize}
\item {\bf Standard Graph Queries.} To gain intuition about $\uMv$ queries, we note that if $\M$ is the adjacency matrix of a graph, then a single query over a large field (e.g., $\Q$ or $\R$) provides the exact edge count. It is also easy to show that $O(\log n)$ $\uMv$ queries suffice to simulate degree, edge existence, neighbor, or edge sampling queries (see~\secref{local-graph} for details). Therefore, $\uMv$ queries achieve a variety of previous results with only an $O(\log n)$ factor overhead, such as estimating the number of cliques of different sizes~\cite{abgpry-stacs-18, assadi2018simple,elrs-actst-17,ers-ankcs-17, s-ssaat-15}, the number of stars~\cite{grs-csoss-11}, and the
minimum vertex cover~\cite{orrr-nosta-12}. 

\item {\bf Independent Set Queries.} Another line of work considers {\em independent set} oracles for graphs (which return whether a given set of vertices induces an independent set or contains at least one edge), in the context of estimating the number of edges in a graph~\cite{bhrrs-eeiso-18, clw-noeei-20, dl-fgrac-18, dlm-acssw-20}. Interestingly, {bipartite independent set} queries are known to be stronger than independent set queries~\cite{bhrrs-eeiso-18, clw-noeei-20}. 
Other variants of bipartite independent set queries, where
one of the sets is a singleton, have also been studied~\cite{bishnu2019inner,bkkr-csqtp-13,wly-imcvh-13}. While these algorithms are randomized and approximate, other work considers exact graph learning problems~\cite{abasi2019learning, alon2005learning, alon2004learning}.  When $\M$ is a binary matrix over a large enough field (e.g., $\Q$ or $\R$), then $\uMv$ queries generalize both independent and bipartite independent set queries by taking $\u$ and $\v$ to be indicator vectors for the sets. The power of the bipartite version motivates allowing $\u$ and $\v$ to differ in the $\uMv$ model.

\item {\bf Fine-Grained Complexity.} 
Independent set queries are partially motivated by studying the complexity of decision vs.~counting problems~\cite{dl-fgrac-18, dlm-acssw-20}. While we do not know of a natural use of $\uMv$ queries in this area, future work could consider using our algorithms for a similar complexity-theoretic reduction. Our model could also be extended to tensors, where queries are $k$-linear forms, analogous the generalization to $k$-partite independent set queries for counting $k$-cliques, which has applications to $k$-SUM and related problems~\cite{dlm-acssw-20}.

\item {\bf Cut Queries.} Another global graph query model considers {\em cut} queries (which provide the number of edges in a graph $G = (V,E)$ crossing a cut $(S, V \setminus S)$). It is known that $\wt O(n)$ cut queries suffice to exactly compute a minimum cut in a graph, and $\wt O(n^{5/3})$ queries suffice to compute an $s$-$t$ cut~\cite{rubinstein2018computing}. These results have also been extended to multigraphs~\cite{mukhopadhyay2019weighted}. We can directly simulate cut queries via indicator vectors $\u = \boldsymbol{1}_S$ and $\v = \boldsymbol{1}_{\{V \setminus S\}}$, when $\M$ is the adjacency matrix of the graph. As the $\uMv$ model is more general than cut queries, it an interesting open question whether a sublinear number of queries suffice for these problems.

\item  {\bf Matrix-Vector Queries.} 
A similar but more powerful query model considered by previous work involves matrix-vector queries~\cite{sun2019querying}. In this case, the queries return a vector of~$n$ values $\v^{\T}\M$ or $\M\v$ when given a vector $\v \in \F^n$. We study many of the same problems as this prior work. Certain problems, such as determining if a matrix is symmetric or diagonal, have constant query complexity in both models, even though $\uMv$ queries reveal much less information than matrix-vector queries. Previous work also considers lower bounds for the operator norm in the matrix-vector model~\cite{braverman2019gradient}, as well as the query complexity of computing PCA~\cite{simchowitzAR18}. Finally, we provide examples where matrix-vector queries are more powerful because there are lower bounds for $\uMv$ queries (see, e.g., \secref{reduction}). 

\item {\bf $\uMv$ Data Structures.} A complementary line of work considers the data structure complexity of $\uMv$ queries~\cite{ChakrabortyKL18, ChattopadhyayKLM18, DvirGW19, LarsenW17, natarajan2020equivalence}. More precisely, the goal is to preprocess $\M$ using a small amount space so that the value of $\u^{\T}\M\v$ can be obtained with a small query time (e.g., in the cell-probe model or natural restrictions of that model). Since there are connections between such data structures and challenging complexity theoretic problems (e.g., matrix rigidity, see~\cite{DvirGW19, natarajan2020equivalence}), it is an outstanding question to further explore whether our results have implications for $\uMv$  data structures or vice versa. 
\end{itemize}

\subsection{Our Results}
\seclab{results}

We provide new upper and lower bounds on the query complexity of various problems in the $\uMv$ model. \tabref{results} summarizes our results.  Many of the bounds are nearly tight: for some problems $O(1)$ queries suffice, and for others, either $\wt \Theta(n)$ or $\wt \Theta(n^2)$ are necessary and sufficient. We defer formal definitions to the relevant subsections. Here we highlight some interesting results.

\medskip \noindent {\bf General Techniques.} 
Querying the matrix with well chosen random vectors turns out to be a powerful algorithmic primitive that we employ often. In some cases, we use random indicator vectors to compare the number of ones in various submatrices (\thmref{permutation_upper} and \thmref{star}). Another technique is to choose random vectors $\u$, $\v$ whose entries are i.i.d.~and uniformly sampled from a field. If the matrix $\M$ satisfies certain properties, then $\u^{\T}\M\v$ will be nonzero with constant probability. We can prove this with the Schwartz-Zippel lemma if $\M$ is nonzero and the field has more than two elements; otherwise, for $\F_2$, we need a more elaborate analysis (\thmref{diagonal} and \thmref{identical}). We also use random Gaussian vectors (\thmref{all_one} and \thmref{identical}). Many of our lower bounds follow from a reduction to two-player communication complexity; we express the matrix as a function of two submatrices and show that the players can simulate the query algorithm to solve the communication problem  (\thmref{unitary_deterministic}, \thmref{maj}, \thmref{permutation_lower}).

\medskip \noindent {\bf Linear Algebra Problems.}  \secref{schatten} provides lower bounds for approximately computing many matrix norms, such as the trace norm, Frobenius norm, and operator norm (in general, we study Schatten $p$-norms; see~\secref{schatten} for the definition). To prove this result, we develop a general simulation result that allows us to establish lower bounds for adaptive $\uMv$ queries by reducing them to lower bounds for non-adaptive {\em entry-wise} queries. The key idea is that such a simulation result holds whenever the input matrix distribution is rotationally invariant (under row permutations). Then, we utilize known sketching lower bounds for matrix norms that identify a hard distribution that is rotationally invariant~\cite{li2019approximating}. 

On the upper bound side, we give constant-query algorithms for testing if a matrix is diagonal (\secref{diagonal}) or symmetric (\secref{symmetric}). While these algorithms are fairly straightforward, they exhibit the power of $\uMv$ queries to efficiently test for global properties of the matrix. 

We prove nearly-matching bounds for testing if a matrix is orthonormal (over $\R$) or unitary (over $\mathbb{C}$). The lower bound uses an encoding of information via the Hadamard matrix. 

\medskip \noindent {\bf Statistics Problems.}  
Turning to other matrix problems, we consider properties of one or more columns (our results also hold for rows, by symmetry of the query model). 
For example, \secref{all_one} and \secref{identical} provide nearly matching upper and lower bounds for testing if there is an all ones column or two identical columns. Many of our lower bound reductions require certain gadgets that seem to be new in the context of query complexity; for example, see our lower bounds for permutation matrices (\thmref{permutation_lower}). This also has led us to study negative entry detection in its own right, because a lower bound of $\Omega(n^2/\log n)$ from \thmref{negative} essentially provides the reason why certain results for binary matrices (e.g., graphs) cannot be generalized.

\medskip \noindent {\bf Graph Problems.}  Our upper bound on permutation matrices (\thmref{permutation_upper}) gives a constant-query algorithm over $\R$ for detecting whether a graph is a perfect matching. We also provide a constant-query upper bound over $\R$ for testing if a graph is a star on $n$ vertices (\thmref{star}). Both of these are global properties that would be difficult to verify using standard graph queries. They also complement previous results for learning hidden matchings or other structures using independent set queries~\cite{alon2005learning, alon2004learning}. As mentioned previously, simulating local graph queries with $O(\log n)$ $\uMv$ queries over $\R$ gives rise to a number of results on graph parameter estimation in the $\uMv$ model (see, e.g.,~\cite{abgpry-stacs-18, assadi2018simple,elrs-actst-17, ers-ankcs-17, grs-csoss-11, orrr-nosta-12, s-ssaat-15}). 

\medskip \noindent {\bf Organization.} We start with preliminaries in \secref{prelim}. We provide results for linear algebra problems in \secref{linear-algebra}, for statistics problems in \secref{statistics}, and for graph problems in \secref{graphs}. We conclude in \secref{conclusion}.

\begin{table}[htbp]
    \centering
    \caption{Our upper and lower bounds on the query complexity in the $\uMv$ model for $n \times n$ matrices and constant success probability. Results hold over any field unless stated otherwise.}
    \renewcommand{\arraystretch}{1.5}
    \begin{tabular}{lll}
        \rowcolor{gray!15}
        \multicolumn{3}{l}{\bf Linear Algebra Problems} \\
        Schatten $p$-norm &
        $\Omega(\sqrt{n})$ for $p \in [0,4)$, const.~factor approx.~over $\R$& \thmref{matrix-norm-lower}\\
        &
        $\Omega(n^{1-2/p})$ for $p\geq 4$, const.~factor approx.~over $\R$ & \thmref{matrix-norm-lower}\\
        Rank testing & $\Omega(k^2)$ to distinguish rank $k$ vs.~$k+1$ over $\F_p$ & \thmref{rank_F2} \\ 
        & $\Omega(n^{2-O(\varepsilon)})$ for $(1\pm \epsilon)$ approx.~over  $\R$, non-adaptive &   \thmref{rank_R} \\
         
        Trace estimation & $\Omega(n/\log n)$ and $O(n)$ for entries in $\{0, 1, 2, \ldots, n^3 \}$ &  \thmref{trace} \\
    
        Diagonal matrix & $O(1)$ & \thmref{diagonal} \\    
         
        Symmetric matrix & $O(1)$ & \thmref{symmetric} \\
         
        Unitary matrix & $\Omega(n/\log n)$ and $O(n)$ for randomized queries over $\mathbb{C}$ & \thmref{unitary_random} \\
        & $\Omega(n^2/\log n)$ for deterministic queries over $\mathbb{C}$ & \thmref{unitary_deterministic}\\
        \rowcolor{gray!15}
        \multicolumn{3}{l}{\bf Statistics Problems} \\
         
        All ones column & $\Omega(n/\log n)$ and $O(n)$ over $\mathbb{R}$ & \secref{all_one} \\
         
        Two identical columns & $\Omega(n)$  and $O(n \log n)$ over $\F_2$ & \secref{identical}\\ &  $O(n)$ over $\R$ & \thmref{identical} \\ 
         
        Column-wise majority & $\Theta(n^2)$ over $\F_2$ & \thmref{maj} \\
         
        Permutation matrix & $O(1)$ over $\R$ & \thmref{permutation_upper}\\
        
        & $\Omega(n)$ over $\F_2$ & \thmref{permutation_lower} \\
        
        Doubly stochastic matrix & $O(1)$ over  $\R$ & \thmref{double_stochastic_upper} \\
         
        Negative entry detection & $\Omega(n^2/ \log n)$ over  $\R$ & \thmref{negative} \\
        \rowcolor{gray!15}
        \multicolumn{3}{l}{\bf Graph Problems} \\
         
        Triangle detection & $\Omega(n^2 / \log n)$ & \thmref{triangle} \\
         
        Star graph & $O(1)$ over  $\R$ &  \thmref{star}\\
         
        Local graph queries & $O(\log n)$ over $\R$ &  \lemref{local}
    \end{tabular}
    \tablab{results}
\end{table}


\section{Preliminaries}
\seclab{prelim}

We use capital bold letters ($\A, \B, \X, \Y, \M, \ldots$) to represent matrices, lower-case bold letters ($\u,\v,\x,\y,\ldots$) to represent column vectors. We use non-bold lower-case letters ($x, y, \ldots$) to represent strings.
For a matrix $\M$, let $M_{ij}$ denote the entry in \nth{i} row and \nth{j} column. For a vector~$\v$, let $v_i$ denote the \nth{i} entry. For a string $x$, we use  $x_i$ to denote the \nth{i} entry. We use $\F$ to represent arbitrary fields, and use $\F_p$ to represent the finite field with $p$ elements where $p$ is prime, and $\R$ to denote the reals. 
We use $G = (V, E)$ to represent a simple graph, where $V$ denotes the set of vertices and $E$ denotes the set of edges. We query the adjacency matrix.

Some of our lower bounds use the communication complexity of $\textsc{Disjointness}$, where Alice has $x \in \{0, 1\}^n$, Bob has $y \in \{0, 1\}^n$, and they decide if there exists an index $i$ with $x_i = y_i = 1$. The randomized communication complexity is $\Omega(n)$~\cite{kalyanasundaram1992probabilistic, razborov1992distributional}. We also use the following result: if $x$ and $y$ contain exactly $n/4$ ones, then the randomized complexity is still $\Omega(n)$~\cite{bar2004information, haastad2007randomized}.

\section{Linear Algebra Problems}
\seclab{linear-algebra}

\subsection{Lower Bounds for Approximating Matrix Norms}
\seclab{schatten}
\seclab{reduction}

A distribution over matrices $\X \in \R^{n \times n}$ is {\em orthonormal and rotationally invariant} if (i) all rows of each $\X$ in the support are orthonormal and (ii) the distribution remains the same under any permutation of the rows of $\X$. We consider distributions over matrices $\M$ formed by fixing a diagonal matrix $\bsig$, sampling two matrices $\X$ and $\Y$ from orthonormal and rotationally invariant distributions, and letting $\M = \X \bsig \Y^{\T}$. At a high level, we are interested in algorithms for computing functions of the singular values $\bsig$, which remain invariant over matrices in such distributions. 

Our first goal is to prove a structural result relating $\uMv$ queries to entry-wise queries of~$\M$. Then, we use this reduction to prove new lower bounds. To do so, we utilize known streaming lower bounds, and we take advantage of the fact that these lower bounds are based on hard distributions that are orthonormal and rotationally invariant. Recall that $[s] = \{1,2,\ldots, s\}$ and that $\e_i \in \{0,1\}^n$ denotes the \nth{i} standard basis vector.
 
\begin{lemma}\lemlab{reduction-nonadaptive}
Let $\M = \X \bsig \Y^{\T}$ be a random $n \times n$ real-valued matrix, where $\bsig$ is diagonal, and $\X$ and $\Y$ are sampled from orthonormal and rotationally invariant distributions. Any $s \leq n$ deterministic, adaptive queries in the $\uMv$ model can be simulated by $s^2$ non-adaptive entry-wise queries to the values of $\e_i^{\T}\M \e_j$ for $i,j \in [s]$.
\end{lemma}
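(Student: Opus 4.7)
The plan is to show that the vector of $s$ adaptive $\uMv$ query answers is determined, in distribution, by the $s^2$ entries $\{M_{ij} : i,j \in [s]\}$, which is enough because the queries are deterministic and we are free to do arbitrary post-processing on the simulating side. I would use two ingredients: the QR structure of the query matrices, and the rotational invariance of $\X$ and $\Y$.

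First I would collect the query vectors into matrices $\U \defeq [\u_1 \mid \cdots \mid \u_s]$ and $\V \defeq [\v_1 \mid \cdots \mid \v_s]$ in $\R^{n \times s}$ and take QR decompositions $\U = \boldsymbol{P}_U \boldsymbol{R}_U$, $\V = \boldsymbol{P}_V \boldsymbol{R}_V$, where $\boldsymbol{P}_U, \boldsymbol{P}_V$ have orthonormal columns and $\boldsymbol{R}_U, \boldsymbol{R}_V$ are upper triangular $s \times s$. Since
\[
\U^{\T} \M \V \;=\; \boldsymbol{R}_U^{\T} \, (\boldsymbol{P}_U^{\T} \M \boldsymbol{P}_V) \, \boldsymbol{R}_V,
\]
the $s$ query answers sit on the diagonal of $\U^{\T} \M \V$ and are therefore determined by the $s \times s$ matrix $\boldsymbol{P}_U^{\T} \M \boldsymbol{P}_V$ together with $\boldsymbol{R}_U, \boldsymbol{R}_V$; moreover $\boldsymbol{R}_U, \boldsymbol{R}_V$ are functions only of the deterministic transcript, so they can be reconstructed by the simulation without spending any entry queries.

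The next step is to identify the distribution of $\boldsymbol{P}_U^{\T} \M \boldsymbol{P}_V = (\boldsymbol{P}_U^{\T} \X) \bsig (\boldsymbol{P}_V^{\T} \Y)^{\T}$. Using the orthonormal-row and rotationally invariant structure of $\X$, I would argue that $\boldsymbol{P}_U^{\T} \X$ has the same distribution as the first $s$ rows of $\X$ (and similarly for $\Y$). The intuition is that any fixed $\boldsymbol{P}_U$ with orthonormal columns can be extended to a full orthogonal change of basis whose action is absorbed by the row-invariance of $\X$. Combining these, $\boldsymbol{P}_U^{\T} \M \boldsymbol{P}_V$ has the same distribution as $\M[1{:}s, 1{:}s]$, which is exactly the non-adaptive set of $s^2$ entry-wise queries in the lemma's conclusion.

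To turn this per-query fact into a simulation of an adaptive algorithm, I would proceed by induction on the round $i$. The inductive hypothesis is that there exist $\boldsymbol{P}_U^{(i)}, \boldsymbol{P}_V^{(i)} \in \R^{n \times i}$ with orthonormal columns (the first $i-1$ agreeing with the previous round) such that the joint distribution of the first $i$ simulated answers matches that of the first $i$ real answers, using only $\{M_{jk} : j,k \leq i\}$. The inductive step Gram--Schmidts $\u_i$ against the columns of $\boldsymbol{P}_U^{(i-1)}$ (and likewise for $\v_i$), then uses a conditional form of the rotational invariance to conclude that the new answer has the same conditional distribution as a fixed linear combination of the newly revealed entries. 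The main obstacle, and where I would spend most of the effort, is this conditional invariance: because $(\u_i, \v_i)$ depends on $\M$ through the transcript, the unconditional invariance of $\X$ and $\Y$ must be upgraded to the statement that, conditional on the information already extracted, the remaining degrees of freedom of $\X, \Y$ still allow the extended orthonormal frame to be mapped to the coordinate frame distribution-preservingly. Establishing this cleanly under the paper's definition of rotationally invariant is the crux of the argument.
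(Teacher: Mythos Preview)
Your proposal is correct and follows essentially the same route as the paper: both orthogonalize the adaptive query vectors via Gram--Schmidt (your QR factorization is exactly this), reduce the $i$-th answer to a linear combination of entries of $\boldsymbol{P}_U^{\T}\M\boldsymbol{P}_V$ indexed by $[i]\times[i]$ (the paper does this by splitting $\u_s=\a_s+\b_s$, $\v_s=\c_s+\d_s$ into span-plus-orthogonal parts), and then invoke rotational invariance inductively to argue that the new orthonormal directions can be swapped for $\e_i$ without changing the conditional distribution. The only substantive difference is packaging: you phrase the bookkeeping via the upper-triangular $\boldsymbol{R}_U,\boldsymbol{R}_V$, whereas the paper handles the four cross terms $\a_s^{\T}\M\c_s$, $\a_s^{\T}\M\d_s$, $\b_s^{\T}\M\c_s$, $\b_s^{\T}\M\d_s$ explicitly; and you flag the conditional-invariance step as the crux, which the paper asserts (``even conditioned on the previous queries'') without elaborating further.
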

\begin{proof}
We proceed by induction on the number of queries $s \geq 1$. For the base case, consider a query $\u_1 \M \v_1$, where $\u_1, \v_1$ are arbitrary unit vectors. Observe that $\u_1^{\T}\X$ and $\Y^{\T}\v_1$ are random unit vectors, and moreover, they follow the same distribution as $\e_1^{\T}\X$ and $\Y^{\T}\e_1$, respectively. Since $\M = \X \bsig \Y^{\T}$, we see that the values of $\u_1^{\T}\M \v_1$ and $\e_1^{\T}\M \e_1$ are identically distributed as well. 

Suppose the lemma holds for any $s-1$ queries in the $\uMv$ model. Consider a sequence of $s$ queries 
\begin{equation}\eqlab{queries-reduction}
\u_1^{\T}\M \v_1, \ \u_2^{\T}\M \v_2,\ \ldots\ ,\  \u_{s}^{\T}\M \v_{s},
\end{equation}
for unit vectors $\u_i,\v_i$ for $i \in [s]$ that may depend adaptively on the previous queries. Assume without loss of generality that $\u_1,\ldots, \u_s$ and $\v_1, \ldots, \v_s$ are respectively linearly independent. For the final query vectors $\u_s$ and $\v_s$, decompose them as 
$$
\u_s = \a_s + \b_s 
\qquad \mbox{ and } \qquad 
\v_s = \c_s + \d_s,
$$
where 
$$
\a_s \in \mathsf{span}\{\u_1, \u_2, \ldots, \u_{s-1}\}
\qquad \mbox{ and } \qquad 
\c_s \in \mathsf{span}\{\v_1, \v_2, \ldots, \v_{s-1}\},
$$
and where $\a_s$ is orthogonal to $\b_s$, and $\c_s$ is orthogonal to $\d_s$.

Invoking the inductive hypothesis, this decomposition implies that $\a_s^{\T} \M \c_s$ can be simulated using  $\e_i^{\T}\M \e_j$ for $i,j \in [s-1]$. Furthermore, by the orthogonality assumptions, we have that $\b_s^{\T} \M \d_s$ follows the same distribution as $\e_s^{\T}\M \e_s$, even conditioned on the previous queries. 

It remains to argue about $\a_s^{\T} \M \d_s$ and $\b_s^{\T} \M \c_s$. We begin with the former, noting that the latter follows by a symmetric argument. Let $\w_1,\w_2,\ldots,\w_{s-1}$ denote an orthonormal basis for $\mathsf{span}\{\u_1, \u_2, \ldots, \u_{s-1}\}$. Considering the expansion of $\a_s$ in this basis, we observe that, by linearity, it suffices to simulate 
\begin{equation}\eqlab{basis-reduction}
\w_1^{\T}\M \d_s, \ \w_2^{\T}\M \d_s,\ \ldots\ ,\ \w_{s-1}^{\T}\M \d_{s}
\end{equation}
using only the information from $\e_i^{\T}\M \e_s$ for $i \in [s-1]$. To establish this, consider any vector $\w_i$ for $i \in [s-1]$. By assumption, $\X$ and $\Y$ are drawn from orthonormal and rotationally invariant distributions. Since $\w_1,\w_2,\ldots,\w_s$ form an orthonormal basis, we have that $\w_i^{\T} \X$ is a random unit vector following the same distribution as $\e_i^{\T} \X$. Moreover, by orthogonality, for any $i \geq 2$, the distribution of $\w_i^{\T} \X$ remains the same as $\e_i^{\T} \X$ even conditioned on
$$
\w_1^{\T}\X,\ \w_2^{\T}\X,\ \ldots\ ,\ \w_{i-1}^{\T}\X.
$$ 
An analogous argument implies that $\Y^{\T} \d_s$ follows the same distribution as $\Y^{\T} \e_s$, even conditioned on the previous queries. Therefore, we have that $\w_i^{\T} \M \d_s$ is identically distributed as $\e_i^{\T} \M \e_s$. Since this holds for all $i \in [s-1]$, the queries in \Eqref{basis-reduction} can be simulated by  $\e_i^{\T}\M \e_s$ for $i \in [s-1]$.  By symmetry, a similar result holds for simulating $\b_s^{\T} \M \c_s$. Therefore, we have shown that all $s$ deterministic queries in \Eqref{queries-reduction} can be simulated by the $s^2$ entry-wise non-adaptive queries to $\e_i^{\T}\M \e_j$ for $i,j \in [s]$, as desired.
\end{proof}

We use this structural result to prove lower bounds for computing certain matrix norms by applying sketching lower bounds due to Li, Nguyen, and Woodruff~\cite{li2019approximating}. For $p \in (0,\infty)$, the {\em Schatten $p$-norm} of a real matrix $\M \in \R^{n \times n}$ with singular values $\sigma_1, \ldots, \sigma_n$ is defined as 
$$
\| \M \|_p = \left(\sum_{i=1}^n \sigma_i^p\right)^{1/p}.
$$
By convention, the Schatten $0$-norm is the rank of the matrix, and the Schatten $\infty$-norm equals the largest singular value (a.k.a., operator norm). We have the following result for the $\uMv$ model.

\begin{theorem}\thmlab{matrix-norm-lower}
Let $\M \in \R^{n \times n}$ be a matrix. For any value $p \in [0,4)$, computing a constant-factor approximation to the Schatten $p$-norm of $\M$ requires $\Omega(\sqrt{n})$ $\uMv$ queries. For $p \geq 4$, computing a constant-factor approximation to the Schatten $p$-norm of $\M$ requires $\Omega(n^{1-2/p})$ $\uMv$ queries. Both results hold for randomized, adaptive queries with constant success probability.  
\end{theorem}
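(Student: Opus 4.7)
The plan is to combine Lemma \lemref{reduction-nonadaptive} with known sketching lower bounds for Schatten $p$-norm approximation due to Li, Nguyen, and Woodruff \cite{li2019approximating}. By Yao's principle, it suffices to exhibit a distribution over matrices on which every deterministic $\uMv$ algorithm must use the claimed number of queries to output a constant-factor approximation to $\|\M\|_p$ with constant probability, and then argue that the algorithm can be converted into one that only inspects a small submatrix.

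First, I would take the hard input distribution from \cite{li2019approximating}, which has two features that make it ideal for us: (i) it can be written in the form $\M = \X \bsig \Y^{\T}$ with $\bsig$ a fixed diagonal matrix of singular values (chosen so that two distributions on $\bsig$ are hard to tell apart from their Schatten $p$-norms) and $\X,\Y$ independent orthonormal matrices whose laws are invariant under row permutations, and (ii) the distinguishing problem on $\bsig$ yields the sketching lower bounds $\Omega(n)$ for $p \in [0,4)$ and $\Omega(n^{2 - 4/p})$ for $p \geq 4$. If the construction of \cite{li2019approximating} is not literally rotationally invariant, I would symmetrize by replacing $\M$ with $\U \M \V^{\T}$ for independent uniformly random rotations $\U,\V$; this preserves the singular values of $\M$ (and hence $\|\M\|_p$) and makes the distribution orthonormal and rotationally invariant in the sense required by Lemma \lemref{reduction-nonadaptive}.

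Next, I would apply Lemma \lemref{reduction-nonadaptive} to show that any deterministic algorithm making $s \leq n$ adaptive $\uMv$ queries on this distribution can be simulated by a non-adaptive algorithm that reads at most $s^2$ entries $\e_i^{\T} \M \e_j$ with $i,j \in [s]$. Since reading a single entry is itself a $\uMv$ query (an entry-query is a sketch), the sketching lower bounds of \cite{li2019approximating} apply to such entry-wise access. Therefore, constant-factor approximation of $\|\M\|_p$ forces $s^2 = \Omega(n)$ for $p \in [0,4)$, giving $s = \Omega(\sqrt{n})$, and $s^2 = \Omega(n^{2-4/p})$ for $p \geq 4$, giving $s = \Omega(n^{1-2/p})$. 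Yao's principle then promotes the bound to randomized adaptive algorithms with constant success probability.

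The main obstacle I anticipate is the bridge between the sketching or streaming lower bound of \cite{li2019approximating} and the non-adaptive \emph{entry-wise} query model that Lemma \lemref{reduction-nonadaptive} outputs: one has to verify that sampling $s^2$ entries at adversarial locations is no more powerful than a general linear sketch of comparable dimension, or equivalently that the hard instance survives conditioning on a small submatrix. This should follow because the rotational invariance of $\X$ and $\Y$ makes the observed $s \times s$ submatrix statistically equivalent to the top-left $s \times s$ block of $\M$, whose distribution depends on $\bsig$ only through a low-dimensional summary that matches under the two hard marginals of \cite{li2019approximating}; once this is checked, assembling the three ingredients (symmetrization, Lemma \lemref{reduction-nonadaptive}, and the indistinguishability of the two hard singular-value profiles) yields the stated bounds immediately.
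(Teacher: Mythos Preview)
Your proposal is correct and follows essentially the same route as the paper: Yao's principle, the observation that the hard distribution of \cite{li2019approximating} is orthonormal and rotationally invariant, application of \lemref{reduction-nonadaptive} to replace $s$ adaptive $\uMv$ queries by $s^2$ entry-wise queries, and then invocation of the bilinear-sketching lower bounds $s^2 = \Omega(n)$ and $s^2 = \Omega(n^{2-4/p})$. The paper resolves your ``main obstacle'' exactly as you suggest, by noting that the $s^2$ entry-wise queries are themselves a bilinear sketch (the $s\times n$ matrices with an $s\times s$ identity block), so the lower bounds of \cite{li2019approximating} apply directly without any further conditioning argument.
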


We sketch the proof of this theorem, which now follows directly from previous results. Before applying \lemref{reduction-nonadaptive}, we use Yao's principle~\cite{yao1977probabilistic} to show that it suffices to consider deterministic query algorithms for distributions over input matrices. Also, the query vectors can be taken to be unit vectors without loss of generality, as the algorithm can rescale the results. Then, we note that the previous lower bounds use hard distributions that are orthonormal and rotationally invariant~\cite{li2019approximating}. As a result, the distribution of matrices $\M$ satisfies the conditions of~\lemref{reduction-nonadaptive}. 

The previous results hold over the {\em bilinear sketching model}, where the sketches correspond to an $r \times n$ matrix $\U$ and an $s \times n$ matrix $\V$, and the goal is to approximate $\|\M\|_p$ up to a constant factor using $\U \M \V^{\T}$.  Applying \lemref{reduction-nonadaptive}, we see that any algorithm making $s$ queries in the $\uMv$ model corresponds to a bilinear sketch with both matrices being $s \times n$. Moreover, as the conclusion of the lemma only uses entry-wise queries, the corresponding matrices consist of the $s \times s$ identity matrix in the upper left-hand corner, while the rest of the matrix is all zeroes. The lower bound on bilinear sketches implies 
\begin{itemize}
\item $s^2 = \Omega(n)$ for approximating the Schatten $p$-norm with $p \in [0,4)$
\item $s^2 = \Omega(n^{2-4/p})$ for approximating the Schatten $p$-norm with $p \geq 4$.
\end{itemize}
Taking a square root leads to the bounds in \thmref{matrix-norm-lower}.

 The above provides separations between the $\uMv$ and matrix-vector models~\cite{sun2019querying}. Indeed, it is known that there exist non-trivial bilinear sketching matrices for approximating the Schatten $p$-norm whenever $p$ is an even integer. Denoting such sketching matrices as $\U$ and $\V$, it suffices for $\U$ and $\V^{\T}$ to each have $O(n^{1-2/p})$ rows~\cite{li2019approximating} to approximate the Schatten $p$-norm up to a constant factor. Observe that the Schatten $p$-norm of a matrix $\M$ is the same as the Schatten $p/2$-norm of the matrix $\M \M^{\T}$. Thus, if $p$ is an integer multiple of $4$, then in the matrix-vector model one can first compute $\U \M$ and then compute $\M^{\T} \V$, and then multiply these together to obtain $\U \M \M^{\T} \V$, where $\U$ and $\V$ are the corresponding sketching matrices for the Schatten $p/2$-norm. The total cost is $O(n^{1-4/p})$ queries in the matrix-vector model. 
 
 On the other hand, \thmref{matrix-norm-lower} implies that $\Omega(n^{1-2/p})$ queries are necessary in the $\uMv$ model, thus  providing a separation for integers $p \geq 4$ which are multiples of $4$. We also directly get an $\Omega(n)$ lower bound
 for approximating the operator norm up to a constant factor, using the $\Omega(n^2)$ lower bound bound for general sketches
 in \cite{lw16}. For recent work on actually finding the top eigenvector and solving a linear system in the matrix-vector model in the high accuracy regime, see \cite{braverman2019gradient}. 

\subsection{Rank Testing}

Given a matrix $\M \in \F^{n\times n}$, a natural problem is to determine the rank of $\M$. We first consider matrices over a finite field $\mathbb{F}_p$ for a prime $p$.

\begin{theorem} \thmlab{rank_F2}
    Given a matrix $\M \in \mathbb{F}_p^{n\times n}$ and an integer $k$, at least $\Omega(k^2)$ adaptive queries are necessary to decide the rank whether the rank of $\M$ is $k$ or $k+1$ with constant probability. 
\end{theorem}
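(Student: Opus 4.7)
The plan is to prove the $\Omega(k^2)$ lower bound via a reduction to two-party randomized communication complexity. First, restrict the hard instance to matrices supported only on a fixed $(k+1)\times(k+1)$ block of $\M$ (say the upper-left), with all other entries zero; this is without loss of generality because any query outside the block is answered deterministically as zero and yields no information. Take the hard input distribution to be a 50/50 mixture of the uniform distributions on rank-$k$ matrices $R_k$ and rank-$(k+1)$ matrices $R_{k+1}$ in $\mathbb{F}_p^{(k+1)\times(k+1)}$. Standard Gaussian-binomial estimates give $|R_k|,|R_{k+1}|=p^{(k+1)^2-\Theta(1)}$, so both classes are comparable and constitute a constant fraction of $\mathbb{F}_p^{(k+1)^2}$.

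Second, I would simulate any randomized $\uMv$ algorithm by a two-party communication protocol. Split the rows of the active block between Alice (first $\lceil(k+1)/2\rceil$ rows) and Bob (the rest). For each adaptive query vector pair $(\u_t,\v_t)$, both players know $\u_t,\v_t$ since they depend only on the public randomness and on prior answers. Decomposing $\u_t^\T \M \v_t = \u_{A,t}^\T \M_A \v_t + \u_{B,t}^\T \M_B \v_t$, Alice computes her partial bilinear form (a single element of $\mathbb{F}_p$) and sends it to Bob, who completes the answer. Thus $s$ adaptive $\uMv$ queries translate into a protocol of $O(s\log p)$ bits.

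The remaining step is to establish an $\Omega(k^2\log p)$ lower bound on the randomized two-party communication complexity of distinguishing $\mathrm{rank}(\M)=k$ from $\mathrm{rank}(\M)=k+1$ when $\M$ is split between the players. A natural route is a reduction from $\textsc{Disjointness}$ on bit strings of length $\Theta(k^2)$, which has randomized communication complexity $\Omega(k^2)$ (as cited in \secref{prelim}). One promising gadget is the Schur-complement block matrix $\M = \begin{pmatrix} \I_k & \A \\ -\B & \C \end{pmatrix}$, for which block Gaussian elimination gives $\mathrm{rank}(\M) = k + \mathrm{rank}(\C + \B\A)$. One then arranges $\A$, $\B$, and $\C$ so that the Disjointness answer controls whether the correction $\C+\B\A$ has rank $0$ (disjoint) or rank $1$ (collision), with the gadget split so that Alice's part depends only on her bits and Bob's only on his.

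The main obstacle is engineering this encoding so that (i) the promise $\mathrm{rank}(\C+\B\A) \in \{0,1\}$ is maintained on \emph{every} input pair rather than on typical ones, and (ii) the gadget entries are computable locally without joint matrix products. If the direct Disjointness reduction proves delicate over $\mathbb{F}_p$ for $p>2$, a fallback is to invoke a pre-existing $\Omega(k^2)$-bit randomized communication lower bound for a rank-promise problem over $\mathbb{F}_p$ from prior work on communication complexity of linear-algebraic predicates; the simulation in the second paragraph then immediately yields $s=\Omega(k^2)$ $\uMv$ queries, completing the theorem.
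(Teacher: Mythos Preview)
Your high-level plan---simulate each $\uMv$ query by $O(\log p)$ bits of two-party communication and then invoke a communication lower bound for the rank-promise problem---is exactly the paper's argument. The paper, however, skips your Disjointness gadgetry entirely: it uses the additive split $\M=\A+\B$ (Alice holds $\A$, Bob holds $\B$; each sends the single field element $\u^{\T}\A\v$ or $\u^{\T}\B\v$) and directly cites Corollary~23 of Li--Nguyen--Woodruff~\cite{li2014communication} for an $\Omega(k^2\log p)$ randomized communication lower bound on deciding $\mathsf{rank}(\A+\B)\in\{k,k+1\}$. Your ``fallback'' is thus precisely the paper's proof, and your row-wise split is just a special case of the additive one.

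Two points to tighten. First, your Disjointness route, even if the Schur-complement gadget were completed, yields only an $\Omega(k^2)$-bit communication bound; after dividing by the $O(\log p)$ bits per simulated query you would obtain $\Omega(k^2/\log p)$ queries, not $\Omega(k^2)$, unless $p$ is constant. The paper avoids this loss because the cited bound is $\Omega(k^2\log p)$, and the $\log p$ factors cancel. Second, your row split is a specific additive decomposition; the cited lower bound is stated for the general $\M=\A+\B$ model, so you should use that split in the simulation as well (it is no harder: one field element each way per query).
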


\begin{proof}
    We reduce this problem to a communication complexity problem. Alice holds a matrix $\A \in \mathbb{F}_p^{n\times n}$ and Bob holds a matrix $\B \in \mathbb{F}_p^{n\times n}$, where $\M = \A + \B$ and $\mathsf{rank}(\M)  \in \{k, k+1 \}$.  Corollary 23 in \cite{li2014communication} implies that 
    the randomized communication complexity is $\Omega(k^2\log p)$ to determine whether the rank of $\M$ is $k$ or $k+1$. Alice and Bob can simulate the query algorithm using $O(\log p)$ bits of communication per query. Let $q(n,k)$ be the query complexity of this problem in the $\uMv$ model. Then $q(n, k)\log{p} = \Omega(k^2\log p)$, and we conclude that $q(n, k) = \Omega(k^2)$.
\end{proof}

Now consider the real-valued version of rank testing with $\M \in \mathbb{R}^{n\times n}$. 
It is known that if we want to compute the rank of $\M$ up to a factor of $(1\pm \epsilon)$, then this requires $\Omega(n^{2-O(\varepsilon)})$ space in the streaming model~\cite{assadi2017estimating}.
Assadi et. al.~\cite{assadi2017estimating} has shown that even for some special matrices of which the entries are only in $\{-1, 0, 1 \}$, there exists an $\Omega(n^{2-O(\varepsilon)})$ space lower bound for $(1+\varepsilon)$-approximation of the rank. Notice that for $\uMv$ queries, if we choose $\u = (1, 3, 3^2, \ldots, 3^{m-1})^{\T}$ and $\v = (1, 3^m, 3^{2m}, \ldots, 3^{m(m-1)})^{\T}$, then we can exactly reconstruct $\M$ using the value of $\u^{\T}\M\v$. Therefore, we assume that the matrix and the query vectors have integral values bounded by a polynomial in $n$.
Under this assumption, we prove the following theorem:

\begin{theorem} \thmlab{rank_R}
    Given a matrix $\M \in \mathbb{R}^{n\times n}$, if we restrict that the entry of query vectors can be chosen only from $\{0, 1, 2, \ldots, n^c\}$ for some constant $c$, then $\Omega(n^{2-O(\varepsilon)})$ non-adaptive queries are necessary to obtain a $(1+\varepsilon)$-estimation of $\mathsf{rank}(\M)$.
\end{theorem}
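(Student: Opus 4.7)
The plan is to reduce from the streaming rank-approximation lower bound of Assadi, Kapralov and Khanna~\cite{assadi2017estimating}, which gives an $\Omega(n^{2-O(\varepsilon)})$ space lower bound for $(1\pm\varepsilon)$-approximating $\mathsf{rank}(\M)$ even when $\M$ has entries in $\{-1,0,1\}$. Given any non-adaptive $\uMv$ query algorithm $\mathcal{A}$ making $q$ queries with query vectors whose entries lie in $\{0,1,\ldots,n^c\}$, I would construct a streaming algorithm for rank approximation that uses $O(q\log n)$ bits, contradicting the streaming lower bound unless $q=\Omega(n^{2-O(\varepsilon)}/\log n)=\Omega(n^{2-O(\varepsilon)})$.

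The simulation is straightforward precisely because the queries are non-adaptive. The query vector pairs $(\u_1,\v_1),\ldots,(\u_q,\v_q)$ are fixed in advance and do not depend on $\M$. In the streaming model, entries $M_{ij}$ of the hard instance from~\cite{assadi2017estimating} arrive in turn; the simulator maintains $q$ running accumulators, and upon seeing $M_{ij}$ adds $(u_\ell)_i\cdot M_{ij}\cdot (v_\ell)_j$ to the $\ell$-th accumulator for every $\ell\in[q]$. After the stream ends, the $\ell$-th accumulator holds exactly $\u_\ell^{\T}\M\v_\ell$, so the streaming algorithm can feed these values into $\mathcal{A}$'s post-processing step and output its rank estimate.

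The crux is bounding the space. Each accumulator is a sum of at most $n^2$ terms, each of absolute value at most $n^c \cdot 1 \cdot n^c = n^{2c}$, so it never exceeds $n^{2+2c}$ in magnitude. Since $c$ is a constant, this requires only $O(\log n)$ bits per accumulator, for a total of $O(q\log n)$ bits. Combining this with the streaming lower bound, and absorbing the $\log n$ factor into the $n^{-O(\varepsilon)}$ slack (since $\log n = n^{o(1)}$), yields $q=\Omega(n^{2-O(\varepsilon)})$.

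The main subtlety---and the reason the theorem explicitly restricts query vector entries to $\{0,\ldots,n^c\}$---is exactly this bit-complexity bookkeeping: without such a restriction, one could encode an entire row or column of $\M$ in a single real number (as sketched in the paragraph preceding the theorem with the $3^{jm}$ construction), and then a single $\uMv$ query would recover $\M$ entirely, defeating any query lower bound. So the main obstacle is not a clever argument but rather stating the restriction precisely and confirming that the hard instance from~\cite{assadi2017estimating} is an insertion-only stream of integer entries (which it is), so that the simulation genuinely runs in $O(q\log n)$ bits.
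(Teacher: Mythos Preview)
Your proposal is correct and follows essentially the same approach as the paper: both reduce from the $\Omega(n^{2-O(\varepsilon)})$ streaming space lower bound of~\cite{assadi2017estimating} by simulating the non-adaptive $\uMv$ algorithm in the streaming model using $O(q\log n)$ bits, then absorb the $\log n$ factor into the $n^{-O(\varepsilon)}$ slack. If anything, your writeup is more explicit than the paper's about the accumulator-based simulation and the reason the entry-size restriction on query vectors is needed.
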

\begin{proof}
    Let $q(n)$ be the number of $\uMv$ queries sufficient to estimate the rank up to a factor of $(1\pm \epsilon)$. 
    Consider a streaming model with updates of the form $\u_1^{\T}\M\v_1$, $\u_2^{\T}\M\v_2$, $\ldots$, $\u_{q(n)}^{\T}\M\v_{q(n)}$, where $\u_i$ and $\v_i$ are the \nth{i} queries made in the $\uMv$ model for $i = 1,2,\ldots, q(n)$. We can store these queries using $O(q(n)\cdot \log n)$ bits of space (as the matrix and vector entries are polynomially bounded). Using the previous results of~\cite{assadi2017estimating}, we see that $\Omega(n^{2-O(\varepsilon)})$ bits of space are necessary. This implies that $q(n)O(\log n) = \Omega(n^{2-O(\varepsilon)})$, and hence, $$q(n) = \Omega(n^{2-O(\varepsilon)} / \log n) = \Omega(n^{2-O(\varepsilon)}).$$
\end{proof}

\subsection{Trace Estimation}

Estimating the trace of a matrix presents a simple problem where $\uMv$ queries are just as powerful as matrix-vector queries, even though the latter obtains much more information per query. Sun et. al.~\cite{sun2019querying} proves an $\Omega(n / \log n)$ lower bound in the matrix-vector model for trace estimation of symmetric matrix with entries in $\{0, 1, 2, \ldots, n^3\}$. Since the value of $\M\v$ contains all of the information of $\u^{\T}\M\v$, their result is also a lower bound for the $\uMv$ model. Of course, $n$ queries suffice to obtain all the diagonal elements of matrix $\M$, i.e.,
$
    \text{tr}(\M) = \sum_{i = 1}^n M_{ii} = \sum_{i = 1}^n \e_i^{\T}\M\e_i,
$
where $\e_i$ is the \nth{i} standard basis vector. Thus, for trace estimation, we obtain an $\Omega(n / \log n)$ lower bound and an $O(n)$ upper bound. We formalize this as the following theorem.

\begin{theorem} \thmlab{trace}
     Let $\M$ be an $n \times n$ matrix over $\R$ with entries in $\{0, 1, 2, \ldots, n^3 \}$. Assume the query vectors have entries in $\{0, 1, 2, \ldots, n^c \}$ for a constant $c > 0$. Computing a constant factor approximation to the trace $tr(\M)$ has query complexity between $\Omega(n / \log n)$ and  $O(n)$. 
\end{theorem}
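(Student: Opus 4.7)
The plan is to treat the two bounds separately, both being essentially immediate once the right prior result is invoked.

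For the upper bound, I would simply observe that the diagonal entries are directly accessible as $\uMv$ queries: for each $i \in [n]$ issue the query $\e_i^{\T} \M \e_i = M_{ii}$, then output $\sum_{i=1}^n M_{ii}$. This uses exactly $n$ queries, whose vectors live in $\{0,1\}^n \subset \{0,1,\ldots,n^c\}^n$, and it recovers the trace exactly (so in particular a constant-factor approximation). No calculation is required beyond noting that $\mathrm{tr}(\M) = \sum_i \e_i^{\T} \M \e_i$.

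For the lower bound I would invoke the matrix-vector lower bound of Sun et al.~\cite{sun2019querying}, which states that any algorithm computing a constant-factor approximation to the trace of a symmetric matrix with entries in $\{0,1,\ldots,n^3\}$ requires $\Omega(n/\log n)$ matrix-vector queries (with polynomially bounded query vectors, matching our assumption on the $\uMv$ query vectors). The reduction from $\uMv$ to matrix-vector is trivial in the easy direction: given any adaptive $\uMv$ algorithm $A$ that uses $q(n)$ queries, one can simulate it in the matrix-vector model with the same $q(n)$ queries by issuing $\M\v_t$ for each adaptive query vector $\v_t$ of $A$ and then locally computing the inner product $\u_t^{\T}(\M\v_t)$ to feed back to $A$. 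Thus any $q(n)$ upper bound in the $\uMv$ model would yield a $q(n)$ upper bound in the matrix-vector model for the same problem, so the matrix-vector lower bound $q(n) = \Omega(n/\log n)$ transfers to the $\uMv$ model.

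There is essentially no obstacle here; the only thing to be a bit careful about is that the Sun et al.\ lower bound is stated for symmetric matrices with entries in $\{0,1,\ldots,n^3\}$, which is a special case of our input class, so the lower bound applies a fortiori. Combining the two parts gives the claimed $\Omega(n/\log n)$ and $O(n)$ bounds, so the writeup should be only a few lines.
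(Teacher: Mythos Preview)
Your proposal is correct and matches the paper's own argument essentially line for line: the upper bound reads off the diagonal via $\e_i^{\T}\M\e_i$, and the lower bound is inherited from the Sun et al.\ matrix-vector lower bound by the trivial observation that a $\uMv$ query can be simulated by a single matrix-vector query (the paper phrases this as ``the value of $\M\v$ contains all of the information of $\u^{\T}\M\v$''). There is nothing to add.
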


\subsection{Deciding if a Matrix is Diagonal}
\seclab{diagonal}

In this section, we show over any field $\F$ that $O(\log (1/\varepsilon))$ queries suffice to test whether a matrix is diagonal with error probability at most $\varepsilon \in (0,1)$. To do so, we show that a single $\uMv$ query achieves constant success probability.

For each single-query test, we randomly and uniformly choose a subset $S$ of $[n] = \{1, 2, 3, \ldots, n\}$ with size $|S| = \frac{n}{2}$. We select a subset $G$ of size $|G|=2$ from $\F$. Construct the query vectors $\u$ and $\v$ as follows. For each $i \in [n]$, if $i \in S$, then let $u_i$ be randomly and uniformly sampled from $G$, and $v_i = 0$; otherwise let $v_i$ be randomly and uniformly sampled from $G$ and $u_i = 0$. 
If $\u^{\T}\M\v = 0$, then output `Success', otherwise output `Fail'. The whole algorithm outputs `Success' if and only if every test outputs `Success'. Now we formalize this as the following theorem and prove correctness.

\begin{theorem} \thmlab{diagonal}
    Let $\M$ be an $n \times n$ matrix over any field $\F$. Then with $O(\log(\frac{1}{\varepsilon}))$ queries, one can test whether $\M$ is a diagonal matrix with probability at least $1 - \varepsilon$.
\end{theorem}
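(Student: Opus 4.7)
The plan is to show that a single iteration of the test succeeds with constant probability on a non-diagonal matrix, then amplify by independent repetition. Completeness is essentially free: if $\M$ is diagonal, then $\u^{\T}\M\v = \sum_i u_i M_{ii} v_i$, and by construction either $u_i = 0$ (when $i \notin S$) or $v_i = 0$ (when $i \in S$) for every $i$, so every iteration reports `Success' with probability $1$.

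For soundness of a single iteration, suppose $M_{i^* j^*} \neq 0$ for some $i^* \neq j^*$, and let $E_1 = \{i^* \in S,\ j^* \in \comp{S}\}$. A short count on uniform $n/2$-subsets of $[n]$ yields $\Pr[E_1] \geq 1/4$. Conditioned on $E_1$, I would decompose $\u^{\T}\M\v = \x^{\T} N \y$, where $N = M[S, \comp{S}]$ contains the nonzero entry at position $(i^*, j^*)$, and $\x, \y$ have i.i.d.\ entries uniform over the two-element set $G \subset \F$. The goal is then to establish $\Pr[\x^{\T} N \y \neq 0] \geq 1/4$, so that the single-test success probability is at least $1/16$.

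The main obstacle is bounding this bilinear-form probability, since $|G|=2$ makes Schwartz--Zippel useless and the paper explicitly flags $\F_2$ as needing a more elaborate argument. My plan is a two-step conditioning argument valid over any field. First, conditioning on all coordinates of $\y$ except $y_{j^*}$, the entry $(N\y)_{i^*}$ is an affine function of $y_{j^*}$ with leading coefficient $M_{i^* j^*} \neq 0$; such a function vanishes on at most one of the two elements of $G$, giving $\Pr[N\y \neq 0] \geq 1/2$. Second, conditioned on $\z := N\y$ being nonzero, I would pick any index $i$ with $z_i \neq 0$; then $\x^{\T}\z$, viewed as an affine function of $x_i$ with leading coefficient $z_i$, again vanishes on at most one element of $G$, so $\Pr[\x^{\T}\z = 0 \mid \z \neq 0] \leq 1/2$. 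Multiplying the two halves yields the desired $1/4$ bound.

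Finally, to reach error $\varepsilon$, I would run $t$ independent copies of the single-query test and report `Success' iff every copy does so. Completeness is preserved exactly, while a non-diagonal matrix passes all copies with probability at most $(15/16)^t$, which is at most $\varepsilon$ for $t = O(\log(1/\varepsilon))$, giving the claimed query complexity.
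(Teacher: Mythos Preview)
Your proposal is correct and follows essentially the same argument as the paper's proof: both condition on the event that the known nonzero off-diagonal entry $M_{i^*j^*}$ lands in the $S \times \comp{S}$ block (probability $\geq 1/4$), then run the identical two-step affine-function argument---first isolating $y_{j^*}$ to force $(N\y)_{i^*}\neq 0$ with probability $\geq 1/2$, then isolating a coordinate of $\x$ to force $\x^{\T}(N\y)\neq 0$ with probability $\geq 1/2$---yielding the same $1/16$ single-test soundness and the same $O(\log(1/\varepsilon))$ amplification. The only cosmetic difference is that you phrase the intermediate object as the vector $\z = N\y$ and pick any nonzero coordinate in the second step, whereas the paper works with the scalar $t_k = (N\y)_{i^*}$ and reuses the index $i^*$ in both steps; these are equivalent.
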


\begin{proof}
    We show that for each query, if $\M$ is a diagonal matrix, then the test will always succeed; if not, then the test will fail with constant probability. Then by error reduction, $O(\log(\frac{1}{\varepsilon}))$ queries suffice to achieve error probability at most $\varepsilon$.
    
    For each query, we choose $\u$, $\v$ as the above algorithm describes. Therefore, 
    \begin{align*}
        \u^{\T}\M\v = \sum_{i\in S, j \in [n]-S} u_iv_jM_{ij}.
    \end{align*} 
    If $\M$ is diagonal, then $\u^{\T}\M\v = 0$ always holds. If $\M$ is not diagonal, then we claim that $\u^{\T}\M\v$ is non-zero with constant probability. In this case, there exists an off-diagonal element $M_{k\ell} \neq 0$ with $k \neq \ell$. With probability at least $\frac{1}{4}$, $k \in S$ and $\ell \in [n] \setminus S$ simultaneously. Conditioning on this event, let $t_i = \sum_{ j \in [n]-S} v_jM_{ij},$ and rewrite $\u^{\T}\M\v$ as 
    \begin{align*}
        \u^{\T}\M\v = \sum_{i\in S} u_i \sum_{ j \in [n]-S} v_jM_{ij} = \sum_{i\in S} u_i t_i.
    \end{align*}
    Let $T = \sum_{ j \in [n]\setminus S\setminus\{\ell\}} v_jM_{kj}$ and notice that
    \begin{align*}
        t_p = \sum_{ j \in [n] \setminus S} v_jM_{kj} =  v_\ell M_{k\ell} + \sum_{ j \in [n]\setminus S\setminus\{\ell\}} v_jM_{kj} = v_\ell  M_{k\ell} + T ,
    \end{align*}
    and $M_{k\ell} \neq 0$, $v_\ell M_{k\ell}$ has two different possible values. Moreover, at most one choice satisfies $v_\ell M_{k\ell} + T = 0$, and hence, $t_k \neq 0$ with probability at least $\frac{1}{2}$. 
    
    Now assume that $t_k \neq 0$. Let $R = \sum_{i\in S \setminus \{k\}} u_i t_i$, and notice that
    \begin{align*}
        \u^{\T}\M\v = \sum_{i\in S} u_i t_i = u_k t_k + \sum_{i\in S \setminus \{k\}} u_i t_i = u_k t_k + R.
    \end{align*}
    Since $t_k \neq 0$, by the same argument, $\u^{\T}\M\v \neq 0$ with probability at least $\frac{1}{2}$. Combining all of these events, the test fails with probability at least $\frac{1}{16}$, which completes the proof.
\end{proof}

\subsection{Deciding if a Matrix is Symmetric}
\seclab{symmetric}

Sun et. al.~\cite{sun2019querying} shows an $O(\log (\frac{1}{\varepsilon}))$ upper bound in the matrix-vector model to test whether an $n \times n$ matrix $\M$ is symmetric with probability $1-\epsilon$. We simulate their method in the $\uMv$ model by repeating the following process $O(\log (\frac{1}{\varepsilon}))$ times: choose random vectors $\u$,$\v$ and test whether $\u{^{\T}}\M\v = \v^{\T}\M\u$. Using the prior result~\cite{sun2019querying}, the error probability is at most $\varepsilon$. 
We formalize this as follows:

\begin{theorem} \thmlab{symmetric}
    Let $\M$ be an $n \times n$ matrix over any field $\F$. Then with $O(\log(\frac{1}{\varepsilon}))$ queries, one can test whether $\M$ is a symmetric matrix with probability at least $1 - \varepsilon$.
\end{theorem}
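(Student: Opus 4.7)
The plan is to simulate the matrix-vector algorithm from Sun et al.\ using only $\uMv$ queries. The single-round test is: sample random vectors $\u, \v \in \F^n$ (each coordinate drawn independently and uniformly from a fixed 2-element subset $G \subseteq \F$, just as in the proof of \thmref{diagonal}), query $\u^{\T}\M\v$ and $\v^{\T}\M\u$ (two $\uMv$ queries), and declare $\M$ symmetric iff the two values are equal. If $\M = \M^{\T}$, then $\u^{\T}\M\v = \v^{\T}\M^{\T}\u = \v^{\T}\M\u$ always, so the test never raises a false alarm. The work is in showing that if $\M \neq \M^{\T}$ then the test detects this with constant probability; repeating $O(\log(1/\varepsilon))$ rounds and taking the OR of the detections drives the error below $\varepsilon$.

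For the soundness analysis, set $\N = \M - \M^{\T}$, so that $\u^{\T}\M\v - \v^{\T}\M\u = \u^{\T}\N\v$. Since $\M$ is not symmetric there exist $k \neq \ell$ (the diagonal of $\N$ is forced to zero in characteristic $\neq 2$ by skew-symmetry, and is irrelevant in characteristic $2$ since $2M_{ii}=0$ still annihilates diagonals in $\N = \M + \M^{\T}$) with $N_{k\ell} \neq 0$. I would then run exactly the same two-step ``peeling'' argument used in \thmref{diagonal}: condition on all coordinates of $\u$ and $\v$ except $v_\ell$, so that
\[
\u^{\T}\N\v \;=\; v_\ell \Bigl(\sum_{i} u_i N_{i\ell}\Bigr) + R_1,
\]
where $R_1$ is a constant independent of $v_\ell$. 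Now further condition on all $u_i$ except $u_k$, giving $\sum_i u_i N_{i\ell} = u_k N_{k\ell} + R_2$. A uniformly random $G$-valued element plus a fixed value of the field equals zero with probability at most $1/2$ (since $|G|=2$); applying this twice in succession shows $\Pr[\u^{\T}\N\v \neq 0] \geq 1/4$. Hence each round of two queries detects asymmetry with probability at least $1/4$, and $O(\log(1/\varepsilon))$ rounds suffice.

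The main obstacle is making the argument work uniformly over every field, including $\F_2$, where Schwartz-Zippel over a random point of $\F^n$ is not available. Restricting to a 2-element subset $G$ and doing the sequential-exposure argument (rather than sampling from the whole field and quoting Schwartz-Zippel) is precisely what buys field-independence, mirroring the technique already developed in \thmref{diagonal}. No additional machinery is needed, so the proof will be very short: one sentence for completeness, one paragraph for soundness via the two peeling steps, and one sentence for amplification.
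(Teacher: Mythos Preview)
Your proposal is correct and uses exactly the same test as the paper: query $\u^{\T}\M\v$ and $\v^{\T}\M\u$ for random $\u,\v$ and compare. The paper simply cites Sun et al.\ for the soundness bound, whereas you supply a self-contained analysis via the two-step sequential-exposure argument from \thmref{diagonal}; this is a nice addition since it makes the field-independence (including $\F_2$) explicit. One small presentational fix: your conditioning order is stated backwards (you ``condition on all of $\u$'' and then ``further condition on all $u_i$ except $u_k$''); the clean way is to first expose all of $\u$ except $u_k$ to show $\sum_i u_i N_{i\ell}\neq 0$ with probability $\ge 1/2$, and then, conditioned on that, expose all of $\v$ except $v_\ell$.
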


\subsection{Deciding if a Matrix is Unitary}
\seclab{unitary}

The results on query complexity in this subsection also apply for testing if a matrix is orthonormal over $\mathbb{R}$, since orthonormal is a special case of unitary.

\subsubsection{Randomized Queries}

Given an  $n \times n$ complex matrix $\M$, a single  matrix-vector query can determine whether $\M$  is unitary with probability one~\cite{sun2019querying}. Hence in the $\uMv$ model, $n$ randomized queries suffice, by obtaining the entries of the vector $\M\v$ using $\u = \e_i$ for $1 \leq i \leq n$.
Now we show that the $O(n)$ algorithm is nearly optimal by proving a lower bound $\Omega(n/ \log n)$ in the random case. 

\begin{theorem} \thmlab{unitary_random}
    Let $\M$ be an $n \times n$ matrix over $\mathbb{C}$. Then to determine whether $\M$ is a unitary matrix with a constant probability, the lower bound of query complexity is $\Omega(n / \log n)$ and the upper bound is $O(n)$.
\end{theorem}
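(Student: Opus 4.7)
The upper bound plan is to piggyback on the known matrix-vector model result of~\cite{sun2019querying}, which says one matrix-vector query suffices to decide unitarity with probability one (pick a random $\v$ and test whether $\|\M\v\|_2 = \|\v\|_2$ and whether $\M^{\T}\M\v = \v$). In the $\uMv$ model, one such product $\M\v$ is recovered by issuing the $n$ queries $\e_i^{\T}\M\v$ for $i \in [n]$, giving an overall algorithm with $O(n)$ queries.

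For the lower bound, the plan is a reduction from two-party \textsc{Disjointness}, whose randomized public-coin communication complexity is $\Omega(n)$. Given $x \in \{0,1\}^n$ held by Alice and $y \in \{0,1\}^n$ held by Bob, define the real diagonal matrix $\M$ with $M_{ii} = 1 - 2x_i - 2y_i$. Then $M_{ii} \in \{1,-1,-3\}$, with $|M_{ii}| = 1$ exactly when $(x_i,y_i) \neq (1,1)$. Since a diagonal matrix (viewed over $\mathbb{C}$) is unitary iff every diagonal entry has modulus one, $\M$ is unitary iff $x$ and $y$ are disjoint. The key observation is that this $\M$ admits an additive split: write $\M = \A + \B$ with $A_{ii} = -2x_i$ (known to Alice) and $B_{ii} = 1 - 2y_i$ (known to Bob).

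Suppose there is a randomized, adaptive algorithm deciding unitarity with $q = q(n)$ $\uMv$ queries at constant success probability. Alice and Bob share the algorithm's coins via public randomness, so at each adaptive step both players know the next query vectors $\u,\v$ (they will both learn the previous answers). For each query, $\u^{\T}\M\v = \u^{\T}\A\v + \u^{\T}\B\v$; under the standing assumption that matrix and query entries have $O(\log n)$-bit complexity, Alice transmits $\u^{\T}\A\v$ in $O(\log n)$ bits and Bob reconstructs the full query answer. Thus $q$ queries yield an $O(q \log n)$-bit protocol for \textsc{Disjointness}, forcing $q = \Omega(n/\log n)$.

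The main obstacle is designing the additive decomposition so that (i) Alice's contribution depends only on $x$, (ii) Bob's depends only on $y$, and (iii) unitarity of $\M = \A + \B$ cleanly encodes the \textsc{Disjointness} answer. The diagonal gadget above works because the condition $|M_{ii}| = 1$ separates the cases $(x_i,y_i) = (1,1)$ from all others; if this simple construction did not carry through (say, over more restrictive fields), the fallback would be a block construction of the form $\A + \B = \bigoplus_i (\alpha I + \beta x_i I + \gamma y_i I)$ with coefficients chosen so the intersecting case produces a non-unitary scalar, but no such complication is needed here.
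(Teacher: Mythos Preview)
Your upper bound is identical to the paper's: simulate one matrix-vector query by the $n$ queries $\e_i^{\T}\M\v$.

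Your lower bound is correct but genuinely simpler than the paper's. The paper also reduces from \textsc{Disjointness}, but it uses a more elaborate Hadamard-based gadget: assuming $|x|=|y|=n/4$, Alice embeds a shifted copy of $\frac{1}{\sqrt{n/4}}\H_{n/4}$ into the submatrix indexed by her $1$-positions (and a second copy with negated signs), Bob does the same, and they consider the four matrices $\M_{ij}=\X_i+\Y_j+\I$. In the disjoint case all four are block-unitary; in the intersecting case one of the four has a diagonal entry strictly less than $-1$, hence is not unitary. This requires running the $\uMv$ algorithm on four instances and arguing about the block structure.

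Your diagonal construction $M_{ii}=1-2x_i-2y_i$ collapses all of that machinery: the unitarity test becomes exactly ``is every diagonal entry in $\{\pm 1\}$,'' which is precisely the disjointness predicate, and the additive split $\A+\B$ is immediate. The only cost is that your hard instances are diagonal, so the lower bound is witnessed by a degenerate family of matrices, whereas the paper's instances are dense unitary matrices in the YES case; but for the query-complexity statement as written this makes no difference. One small point: for adaptive simulation Alice also needs each answer to compute the next query pair, so Bob should send back $\u^{\T}\B\v$ (or the sum) as well---still $O(\log n)$ bits per round, so the bound is unaffected.
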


\begin{proof}

Without loss of generality, let $n = 2^k$. We reduce the problem to  $\textsc{Disjointness}$. Suppose Alice has a string $x \in \{0, 1\}^n$, and Bob has a string $y \in \{0, 1\}^n$. Moreover, $\x$ and $\y$ both contain exactly $\frac{n}{4}$ ones, i.e. $\vert \{ i \in [n] \mid x_i = 1 \} \vert = \frac{n}{4}$ and $\vert \{ i \in [n] \mid y_i = 1 \} \vert = \frac{n}{4}$. Now Alice and Bob want to find whether there exists an index $i$ such that $x_i = y_i = 1$. The communication complexity of this problem is $\Omega(n)$~\cite{bar2004information, haastad2007randomized}. Now we show a protocol of the communication. First, let us recall one construction of a Hadamard matrix.

\begin{defn:unnumbered}
    Let 
    \begin{equation*}
        \H_1 = 
        \left[
        \begin{array}{c}
        1
        \end{array}
        \right ]
        , \text{ and }
        \H_{2^k} = 
        \left[
        \begin{array}{cc}
        \H_{2^{k-1}} & \H_{2^{k-1}} \\
        \H_{2^{k-1}} & -\H_{2^{k-1}}
        \end{array}
        \right ]
    \end{equation*}
    be a Hadamard matrix, 
    then we define $\G_{2^k}$ =  $\frac{1}{\sqrt{2^k}} \H_{2^k}$ for any $k \geq 1$.
\end{defn:unnumbered}

By the definition, $\G_{n/4}^*\G_{n/4} = \I_{n/4}$, which means $\G_{n/4}$ is a unitary matrix. Also, we denote the element of row $i$ and column $j$ of matrix $\G_{n/4}$ by $g_{i, j}$. Then Alice constructs an $n \times n$ matrix $\X$ with the following method.

Let $a_i$ denote the \nth{i} smallest position of string $x$ with value 1. For example, if $x = 00100010$, then $a_1 = 3, a_2 = 7$. Then Alice fills exactly $(n/4) \times (n/4)$ elements of matrix $\X$, i.e. 
\begin{equation*}
    \X_{a_i, a_j} = 
    \left\{
    \begin{array}{cc}
         g_{i, j} &, i \neq j \\
         g_{i, j} - 1 &, i = j,
    \end{array}
    \right. 
    \text{ where 1 $\leq$ $i, j$ $\leq$ $n/4$.}
\end{equation*}

Other elements of $\X$ are all 0s. Alice constructs another matrix $\X'$, which is the same as $\X$ except that
\begin{equation*}
    \X'_{a_i, a_j} = 
    \left\{
    \begin{array}{cc}
         - g_{i, j} &, i \neq j \\
         - g_{i, j} - 1 &, i = j,
    \end{array}
    \right. 
    \text{ where 1 $\leq$ $i, j$ $\leq$ $n/4$.}
\end{equation*} 

Let $\X_1 = \X$ and $\X_2 = \X'$. Bob uses the similar method to construct matrices $\Y_1$ and $\Y_2$ using his string $y$. If $x$ and $y$ are not intersected, i.e. there does not exist an index $k$, such that $x_k = y_k = 1$, then the four matrices $\M_{ij} = \X_{i} + \Y_{j} + \I$ are all unitary for $1 \leq i, j \leq 2$, where $\I$ is the identity matrix. However, if $x$ and $y$ intersects, then there exists an index $k$ such that $x_k = y_k = 1$. 
We argue that $\M_{ij}$ is not unitary because there exists $i, j \in \{ 1, 2 \}$ such that the element of \nth{k} row and \nth{k} column of matrix $\M_{ij}$ equals
\begin{align*}
    (-\frac{1}{\sqrt{n/4}} - 1) + (-\frac{1}{\sqrt{n/4}} - 1) + 1 = -1 - \frac{2}{\sqrt{n/4}} < -1. 
\end{align*}

Therefore, Alice and Bob can compute $\u^{\T}\M_{ij}\v$ by sending $\u^{\T}\X_i\v$ and $\u^{\T}\Y_j\v$, which take $O(\log n)$ bits by one communication. Assume that $q(n)$ queries can determine whether an $n \times n$ matrix is unitary, since $\textsc{Disjointness}$ requires $\Omega(n)$ bits, $q(n)O(\log n) = \Omega(n)$, which demonstrates that $q(n) = \Omega(n / \log n)$. 
\end{proof}

\subsubsection{Deterministic Queries}

For deterministic case, a trivial upper bound is $O(n^2)$ by retrieving all the entries of matrix $\M$ one by one. Now we show a strong lower bound $\Omega(n^2 / \log n)$, which demonstrates that the trivial algorithm is optimal up to the logarithmic factor. 

\begin{theorem} \thmlab{unitary_deterministic}
    Let $\M$ be an $n \times n$ matrix over $\mathbb{C}$. Determining whether $\M$ is a unitary matrix requires at least $\Omega(n^2 / \log n)$ queries in the deterministic $\uMv$ model.
\end{theorem}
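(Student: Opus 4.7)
The plan is to prove the deterministic $\Omega(n^2/\log n)$ lower bound by reducing from the matrix-equality problem in two-player deterministic communication complexity, whose standard fooling-set argument gives $\Omega(n^2)$ bits for testing equality of two $(n/2) \times (n/2)$ binary matrices. Combined with an $O(\log n)$-bit-per-query simulation, this will yield the claimed bound.

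The key gadget I would use is the block upper-triangular matrix
$$
\M \deq \begin{pmatrix} \I_{n/2} & \A - \B \\ \boldsymbol{0} & \I_{n/2} \end{pmatrix},
$$
where Alice holds $\A \in \{0,1\}^{(n/2) \times (n/2)}$ and Bob holds $\B \in \{0,1\}^{(n/2) \times (n/2)}$. A direct block computation gives
$$
\M \M^{*} \;=\; \begin{pmatrix} \I + (\A - \B)(\A - \B)^{*} & \A - \B \\ (\A - \B)^{*} & \I \end{pmatrix},
$$
so $\M$ is unitary if and only if $\A - \B = \boldsymbol{0}$, i.e., if and only if $\A = \B$. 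Thus any deterministic algorithm that decides unitarity of $\M$ also decides matrix equality for $\A$ and $\B$.

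Next I would simulate each $\uMv$ query on $\M$ using $O(\log n)$ bits of two-way communication. Splitting the query vectors as $\u = (\u_1; \u_2)$ and $\v = (\v_1; \v_2)$ with $\u_i, \v_i \in \mathbb{C}^{n/2}$, one has
$$
\u^{\T}\M\v \;=\; \u_1^{\T}\v_1 + \u_2^{\T}\v_2 + \u_1^{\T}\A\v_2 - \u_1^{\T}\B\v_2.
$$
Since both players share the deterministic algorithm and therefore know $\u,\v$ at each step, Alice can compute $\u_1^{\T}\A\v_2$ locally and Bob can compute $\u_1^{\T}\B\v_2$ locally; exchanging these two scalars, each of bit-complexity $O(\log n)$, suffices to continue the simulation. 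A $q$-query deterministic unitarity-testing algorithm therefore yields an $O(q \log n)$-bit deterministic protocol for matrix equality, and the $\Omega(n^2)$ communication lower bound forces $q = \Omega(n^2/\log n)$.

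The main obstacle is designing the reduction gadget so that unitarity of $\M$ is equivalent to the matrix-equality question \emph{exactly} in both directions. Naive choices such as $\I + \epsilon(\A - \B)$ can fail because small skew-Hermitian perturbations still leave $\M$ near-unitary, blurring the characterization. The block upper-triangular construction above cleanly sidesteps this: the off-diagonal block of $\M \M^{*}$ is precisely $\A - \B$, so unitarity collapses exactly to $\A = \B$, and keeping $\A, \B$ binary guarantees the per-query communication is indeed $O(\log n)$ bits, which is crucial for tightness of the final bound.
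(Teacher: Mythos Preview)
Your proof is correct and takes a genuinely different route from the paper's.

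The paper reduces from a balanced-promise version of deterministic \textsc{Disjointness} on $\Theta(n^2)$ bits, using a Hadamard-matrix gadget: it writes $\M = \frac{1}{\sqrt{n}}(\X + \Y + \Z)$, where $\Z$ carries the $-1$ entries of $\H_n$ and Alice and Bob each fill in exactly half of the $+1$ positions with their strings $x,y$. Disjointness forces $\X+\Y+\Z=\H_n$, hence $\M$ is unitary; any intersection creates a $2/\sqrt{n}$ entry that destroys unitarity. Your reduction from deterministic \textsc{Equality} via the block upper-triangular gadget $\M = \bigl(\begin{smallmatrix} \I & \A - \B \\ \boldsymbol{0} & \I \end{smallmatrix}\bigr)$ is more elementary: the equivalence ``$\M$ unitary $\Leftrightarrow \A = \B$'' falls out of a single block computation of $\M\M^{*}$, no promise on the inputs is needed, and the fooling-set lower bound for \textsc{Equality} is textbook. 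Both arguments lean on the paper's standing $O(\log n)$ bit-complexity convention to bound the per-query communication, and both yield the same $\Omega(n^2/\log n)$. What your approach buys is brevity and a cleanly bidirectional characterization; what the paper's Hadamard construction buys is that the hard instances look like ``genuine'' near-unitary matrices (every entry of magnitude $\Theta(1/\sqrt{n})$, every column norm close to $1$), which is a slightly more robust hard family if one later cares about approximate versions of the problem.
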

\begin{proof}
    We reduce the problem to $\textsc{Disjointness}$. Without loss of generality, let $n = 2^k$. 
    The Hadamard matrix $\H_n$ contains $n^- = \frac{n(n-1)}{2}$ entries with value $-1$ and $n^+ = \frac{n(n+1)}{2}$ entries with value 1. We let $h_{i, j}$ denote the element of row $i$ and column $j$ in the matrix $\H_{n}$. 
Then, let $\Z$ be the $n \times n$ matrix defined as
\begin{equation*}
    Z_{i, j} = 
    \left\{
    \begin{array}{rl}
         -1, & \quad \mathrm{if\ \ } h_{i,j} = -1 \\
         0, & \quad \mathrm{if\ \ } h_{i, j} = 1.
    \end{array}
    \right. 
\end{equation*} 

Now Alice holds a string $x$ and Bob holds a string $y$, where $x, y \in \{ 0, 1 \}^{n^+}$. Each of the strings contains exactly $\frac{n^+}{2}$ 1s. In the deterministic case, it requires $\Omega(n^+) = \Omega(n^2)$ bits of communication to decide whether the two strings intersect. Alice constructs an $n \times n$ matrix $\X$ as follows. Initially, all entries of~$\X$ are zero. Then, linearly index the positions of 1s in the $n \times n$ matrix $\H_n$ simply by $1, 2, 3, \ldots, n^+$. For each~$i$, where $1 \leq i \leq n^+$, if $x_i = 1$, then Alice fills in 1 at position~$i$ in $\X$. Bob constructs a matrix $\Y$ using the string $y$ via the same method.  Let $\M = \frac{1}{\sqrt{n}}(\X+\Y+\Z)$. Notice that $x$ and $y$ do not intersect if and only if $\M$ is unitary. To exchange $\u^{\T}\X\v$ or $\u^{\T}\Y\v$ needs only $O(\log n)$ bits, so the lower bound is $\Omega(n^2 / \log n)$. 
\end{proof}

\section{Statistics Problems}
\seclab{statistics}

\subsection{All Ones Column} \seclab{all_one}

Let $\M \in \{ 0, 1 \}^{n \times n}$ be a binary matrix. Sun et. al.~\cite{sun2019querying} show a lower bound of $\Omega(n / \log n)$ for matrix-vector queries over $\R$ when restricting the entries in the query vector to lie in the set  $[n^c] = \{1, 2, \ldots, n^c\}$ for some constant $c$. This lower bound can be applied directly to the $\uMv$ model. The following theorem shows that this is tight up to  the logarithmic factor. 

\begin{theorem} \thmlab{all_one}
    Given a matrix $\M \in \{ 0, 1 \}^{n \times n}$ over $\R$, then $O(n)$ queries suffice to test whether there exists an all ones column in $\M$ with probability one.
\end{theorem}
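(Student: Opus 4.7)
The plan is to give a deterministic $n$-query algorithm that computes all $n$ column sums of $\M$ and then checks whether any of them equals $n$. Specifically, for each $j \in [n]$, issue the single $\uMv$ query with $\u = \mathbf{1}_n$ (the all-ones vector in $\R^n$) and $\v = \e_j$. The returned value is
$$
\mathbf{1}_n^{\T}\M\e_j \;=\; \sum_{i=1}^n M_{ij},
$$
i.e.\ the sum of the entries in the $j$\th column of $\M$. After all $n$ queries, output ``yes'' iff at least one of these sums equals $n$, and ``no'' otherwise.

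Correctness follows immediately from the fact that the entries of $\M$ lie in $\{0,1\}$: a $\{0,1\}$-valued column has sum exactly $n$ if and only if every one of its entries equals $1$. Hence column $j$ is all ones iff the $j$\th returned value is $n$, and the algorithm is deterministic, producing the right answer with probability one. The total number of queries is exactly $n$, matching the claimed $O(n)$ bound. The fact that we are working over $\R$ (or indeed any field of characteristic zero, or of characteristic larger than $n$) is used implicitly here, since we must distinguish the integer $n$ from every smaller nonnegative integer; over $\F_2$ the same reduction would collapse.

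There is essentially no obstacle: the algorithm is a direct instantiation of $\uMv$ queries with a fixed all-ones left vector and standard basis right vectors, and the correctness check reduces to reading off which coordinate of the vector of column sums equals $n$. The only thing worth noting is that this algorithm nearly matches the $\Omega(n/\log n)$ matrix-vector lower bound of~\cite{sun2019querying} cited just above the theorem, so no clever randomization is needed to reach the $O(n)$ upper bound.
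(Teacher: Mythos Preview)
Your argument is correct and in fact yields a \emph{deterministic} $O(n)$ algorithm, which is slightly stronger than the ``with probability one'' phrasing of the theorem. It differs from the paper's proof, though. The paper chooses a random Gaussian vector $\u$, queries $\u^{\T}\M\e_i$ for each $i$, and compares each result to $s = \sum_j u_j$; if column $i$ is not all ones, then $s - \u^{\T}\M\e_i$ is a nontrivial linear combination of independent Gaussians and hence nonzero almost surely.

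The trade-off is this: your column-sum test is simpler and deterministic, but it relies essentially on the entries lying in $\{0,1\}$ (so that a column sum of $n$ forces every entry to be $1$). The paper's Gaussian test is randomized, but it is really testing whether each column equals the all-ones vector \emph{exactly}, and so would work verbatim for arbitrary real-valued matrices, not just binary ones. For the theorem as stated, your approach is the more direct one; the paper's approach illustrates a technique (random Gaussian probes) that they reuse elsewhere, e.g., in~\thmref{identical}.
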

\begin{proof}
    We construct a random vector $\u \in \R^n$, where each entry $u_i$ is independent and follows the standard Gaussian distribution. Let $\e_i$ denote the $n$ dimensional vector with \nth{i} entry 1 and all other entries 0s, and let $\e = \sum_{i=1}^n \e_i$ be the all ones $n$-dimensional vector. Also, let $\c_i$ denote the \nth{i} column of matrix $\M$. Since we have
    \begin{equation*}
        \u^{\T}\M\e_i = \u^{\T}
        \left[
        \begin{array}{cccc}
             \c_1 & \c_2 & \cdots & \c_n 
        \end{array}
        \right] 
        \e_i = \u^{\T}\c_i = \sum_{j=1}^n 
        u_jc_{ij},
    \end{equation*}
    if we compute the sum of all entries of $\u$, i.e.
$
        s = \sum_{i=1}^n u_i,
$
    then when $\c_i$ is an all ones column, all $c_{ij} = 1$, so $s = \u^{\T}\M\e_i$. Otherwise, 
    \begin{align*}
        s - \u^{\T}\M\e_i = \sum_{1 \leq j \leq n, c_{ij} = 0} u_j.
    \end{align*}
    The above quantity equals to 0 with probability 0, which means $s \neq  \u^{\T}\M\e_i$ with probability one.
    By querying $\u^{\T}\M\e_i$ for $1 \leq i \leq n$, and comparing the result to $s$, we can detect whether there is an all ones column with probability one, demonstrating an upper bound of $O(n)$ queries.
\end{proof}

\subsection{Identical Columns} \seclab{identical}

Let $\M \in \{0, 1\}^{n \times n}$. Rearrange $\M$ in the following way:
$$
    \M = 
    \left[
    \begin{array}{cccc}
    \c_1 & \c_2 & \cdots & \c_n
    \end{array}
    \right].
$$
We wish to determine whether there exists $i, j$, such that $1 \leq i < j \leq n$ and $\c_i = \c_j$.

We consider the lower bound on the query complexity over $\F_2$ first. 

\begin{theorem} 
    Let $\M \in \{0, 1 \}^{n \times n}$ be a binary matrix over $\F_2$. Let $\varepsilon$ be a real number such that $0 < \varepsilon < 1$ and $n \geq 2(1+ \log \frac{n^2}{\varepsilon})$, then $\Omega(n)$ queries are necessary to detect whether there exist two identical columns in $\M$ with  probability at least $1- \varepsilon$.
\end{theorem}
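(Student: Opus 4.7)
The plan is to reduce from the sparse $\textsc{Disjointness}$ problem recalled in the preliminaries: Alice holds $\x \in \{0,1\}^{n-1}$ and Bob holds $\y \in \{0,1\}^{n-1}$, each with exactly $(n-1)/4$ ones, and they must decide whether $\mathrm{supp}(\x) \cap \mathrm{supp}(\y) = \emptyset$. Since every $\uMv$ query over $\F_2$ reveals only a single bit, an $O(1)$-bit-per-query simulation of any query algorithm will translate the $\Omega(n)$ randomized communication lower bound into the claimed $\Omega(n)$ query lower bound.

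For the matrix construction, I partition the rows of $\M \in \F_2^{n\times n}$ into Alice's top $\lceil n/2 \rceil$ rows and Bob's bottom $\lfloor n/2 \rfloor$ rows, and fix in advance distinct nonzero signatures $\mathbf{s}_i \in \{0,1\}^{\lceil n/2 \rceil}$ and $\mathbf{t}_i \in \{0,1\}^{\lfloor n/2 \rfloor}$ for $i \in [n-1]$. For each $i \in [n-1]$, Alice sets the top half of column $i$ of her matrix $\A$ to $\mathbf{0}$ when $x_i = 1$ and to $\mathbf{s}_i$ otherwise; symmetrically, Bob uses $y_i$ to decide between $\mathbf{0}$ and $\mathbf{t}_i$ in the bottom half of column $i$ of $\B$. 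The final column $n$ is set to all zeros by both parties (a ``dummy'' column). The resulting matrix $\M = \A + \B$ has the key property that column $i$ equals $(\mathbf{0},\mathbf{0})$ iff $i \in \mathrm{supp}(\x)\cap\mathrm{supp}(\y)$, and otherwise carries a unique nonzero signature in at least one block. Consequently, two data columns outside the intersection are always distinct, while the dummy column coincides with a data column iff that data column lies in the intersection, so $\M$ has two identical columns iff $\mathrm{supp}(\x)\cap\mathrm{supp}(\y) \neq \emptyset$.

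Using shared randomness, Alice and Bob locally replay the (possibly adaptive, randomized) query algorithm, so both know the next query $(\u,\v)$ before it is issued. They resolve $\u^{\T}\M\v = \u^{\T}\A\v + \u^{\T}\B\v$ over $\F_2$ with only $O(1)$ bits of communication (Alice sends her contribution, Bob XORs with his and returns the answer). Therefore any algorithm making $q$ queries with error at most $\varepsilon$ yields an $O(q)$-bit two-party protocol that solves sparse $\textsc{Disjointness}$ with error at most $\varepsilon$, and the communication lower bound forces $q = \Omega(n)$.

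The main obstacle is to ensure the reduction preserves the error parameter $\varepsilon$, which is exactly where the hypothesis on $n$ and $\varepsilon$ enters. The condition $n \geq 2(1 + \log(n^2/\varepsilon))$ is equivalent to $2^{n/2 - 1} \geq n^2/\varepsilon$, which makes the signature space $\{0,1\}^{\lceil n/2 \rceil}$ large enough to absorb a union bound over the $\binom{n}{2}$ potential column pairs and keep the probability of any ``accidental'' collision in the NO case of the reduction below $\varepsilon/2$. Carrying this error bookkeeping through the reduction --- so that a protocol with overall error $\varepsilon$ is obtained from a query algorithm with error $\varepsilon$ --- is the delicate part of the argument; the combinatorial heart is the dummy-column-plus-signature-padding gadget above.
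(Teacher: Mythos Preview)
Your reduction is correct, and the core gadget (dummy all-zero column plus distinct nonzero signatures) works. It differs from the paper's construction in an interesting way, though, and your final paragraph misidentifies the role of the hypothesis.

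The paper uses a \emph{randomized} fill: Alice's top block has first row $(\x^{\T},1)$, and each remaining entry in column $j$ is set to $1$ if $x_j=1$ and to a uniform random bit if $x_j=0$; Bob fills the bottom block symmetrically, and the last column is forced to all ones. In the intersecting case some data column becomes all ones and matches the last column. In the disjoint case, any pair of data columns has at least $n/2-1$ fresh random bits in one block, so they coincide with probability at most $2^{-(n/2-1)}$; the union bound over $\binom{n}{2}$ pairs is precisely where the hypothesis $n \geq 2(1+\log(n^2/\varepsilon))$ enters.

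Your construction, by contrast, is \emph{deterministic}: the signatures $\mathbf{s}_i,\mathbf{t}_i$ are fixed distinct nonzero vectors. As your own case analysis shows, in the disjoint case every data column carries a distinct nonzero signature in at least one block, so there are \emph{no} accidental collisions whatsoever---no union bound is required. The only constraint is that $n-1$ distinct nonzero vectors exist in $\{0,1\}^{\lceil n/2\rceil}$, i.e., $2^{\lceil n/2\rceil}\geq n$, which is far weaker than the stated hypothesis. So the ``absorb a union bound'' and ``collision probability below $\varepsilon/2$'' language in your last paragraph does not apply to your own gadget; you have in fact proved the statement without needing the hypothesis at all. (You also do not need the sparse version of \textsc{Disjointness}; the ordinary version, as used in the paper, suffices here.)
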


\begin{proof}
    We reduce this problem to $\textsc{Disjointness}$. Assume Alice has a string $x \in \{0, 1\}^{n-1}$, and Bob has a string $y \in \{0, 1\}^{n-1}$. Now Alice could construct a matrix $\X \in \{0, 1\}^{\frac{n}{2} \times n}$, where
    \begin{equation*}
        \X = 
        \left[
        \begin{array}{cc}
            \x^{\T} & 1 \\
            \a_1^{\T} & 1 \\
            \a_2^{\T} & 1 \\
            \vdots & \vdots \\
            \a_{\frac{n}{2}-1}^{\T} & 1
        \end{array}
        \right].
    \end{equation*}
    We denote the \nth{j} element of vector $\a_i$ as $a_{ij}$. For each $a_{ij}$, when $x_j = 1$, we let $a_{ij} = 1$; and when $x_j = 0$, we let $a_{ij}$ be a random variable drawn from a uniform distribution in $\{0, 1\}$. Bob constructs $\Y$ by the same method. Then let
    \begin{equation*}
        \M = 
        \left[
        \begin{array}{c}
        \X \\
        \Y
        \end{array}
        \right]
    \end{equation*}
    be an $n \times n$ matrix. If $x$ and $y$ intersect, then the corresponding column of $\M$ is all ones. Since the last column of $\M$ is also all ones, $\M$ contains two identical columns. If $x$ and $y$ do not intersect, then for every two columns, the probability that they are identical is at most $\frac{1}{2^{\frac{n}{2}-1}}$. By a union bound, the probability that there exist two identical columns is less than $\binom{n}{2} \times \frac{1}{2^{\frac{n}{2}-1}} \leq \frac{n^2}{2^{\frac{n}{2}-1}} \leq \varepsilon$, since $n \geq 2(1+ \log \frac{n^2}{\varepsilon})$. Alice and Bob must communicate $\Omega(n)$ bits, and sending $\u^{\T}\X\v$ and $\u^{\T}\Y\v$ each need only one bit over $\F_2$, so $\Omega(n)$ queries are necessary to detect two identical columns.
\end{proof}

For the upper bounds over $\F_2$ and $\mathbb{R}$, we have the following theorem.

\begin{theorem}\thmlab{identical}
    Let $\M \in \{0, 1\}^{n \times n}$ be a binary matrix. 
    \begin{itemize}
        \item $O(n \log(n/ \varepsilon))$ queries over $\F_2$ suffice to detect two identical columns with probability $1-\varepsilon$.
        \item $O(n)$ queries over $\mathbb{R}$ suffice to detect two identical columns with probability one.
    \end{itemize}
\end{theorem}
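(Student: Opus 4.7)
\medskip

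\textbf{Proof Plan for \thmref{identical}.} The plan is to treat both bounds via the same fingerprinting template: choose $k$ random vectors $\u_1,\ldots,\u_k$, use $nk$ queries of the form $\u_\ell^{\T}\M\e_i$ to extract the value $\u_\ell^{\T}\c_i$ for every column $i$ and every $\ell$, and then declare a collision iff two columns produce identical $k$-tuples. Correctness reduces to showing that, for the chosen distribution on $\u_\ell$, two distinct columns $\c_i\neq\c_j$ are very unlikely to give the same fingerprint.

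For the $\R$ case I would take $k=1$ with $\u$ having i.i.d.\ standard Gaussian entries, mirroring the primitive already used in \thmref{all_one}. Since $\u^{\T}\c_i=\u^{\T}\M\e_i$, we get $n$ real numbers $z_1,\ldots,z_n$. If $\c_i=\c_j$ then $z_i=z_j$ deterministically. If $\c_i\neq\c_j$, then $\c_i-\c_j$ is a fixed nonzero vector in $\{-1,0,1\}^n$, so $z_i-z_j=\u^{\T}(\c_i-\c_j)$ is a one-dimensional Gaussian with strictly positive variance and therefore equals $0$ with probability $0$. A union bound over the at most $\binom{n}{2}$ pairs still gives measure zero, so with probability one the columns are identical iff their fingerprints coincide, using exactly $n$ queries.

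For the $\F_2$ case the obstacle is that a single query yields only one bit per column, which cannot separate more than two distinct columns; I would address this by repeating the experiment. Pick $\u_1,\ldots,\u_k$ independently uniform in $\F_2^n$, with $k=\lceil\log_2(n^2/\varepsilon)\rceil=O(\log(n/\varepsilon))$, and compute $\u_\ell^{\T}\M\e_i$ for all $\ell\in[k]$, $i\in[n]$, giving a total of $nk=O(n\log(n/\varepsilon))$ queries. For any fixed pair $\c_i\neq\c_j$ the difference $\w=\c_i-\c_j$ is a nonzero vector in $\F_2^n$, so $\u_\ell^{\T}\w$ is uniform in $\F_2$ and the fingerprints agree in coordinate $\ell$ with probability exactly $1/2$; independence across $\ell$ gives collision probability $2^{-k}\leq\varepsilon/n^2$. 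A union bound over the $\binom{n}{2}$ distinct pairs yields total error at most $\varepsilon$, and conversely identical columns always have identical fingerprints, so the algorithm outputs correctly with probability $\geq 1-\varepsilon$.

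The only step requiring any care is the failure analysis in the $\F_2$ case, and that reduces to the standard observation that $\u^{\T}\w$ is uniform in $\F_2$ whenever $\w\neq\boldsymbol{0}$, together with a union bound; everything else is immediate from the definitions of the queries and the distributions of $\u$.
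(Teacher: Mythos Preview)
Your proposal is correct and follows essentially the same approach as the paper: a random-fingerprint scheme using $\u^{\T}\M\e_i$ with Gaussian $\u$ over $\R$ (one round, probability-zero collision for distinct columns) and uniform $\u\in\F_2^n$ over $\F_2$ (each round separating a fixed distinct pair with probability $1/2$, repeated $O(\log(n/\varepsilon))$ times with a union bound over $\binom{n}{2}$ pairs). The arguments and query counts match the paper's proof.
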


\begin{proof}
    We choose a random $n$-dimensional vector $\u$, where each $u_i$ is independent. Over $\mathbb{R}$, let $u_i$ be chosen from a standard normal distribution $N(0, 1)$; and over $\F_2$, let $u_i$ be chosen uniformly from $\{0, 1\}$. Notice that $n$ queries suffice to obtain $\langle \u, \c_i \rangle$ for $1 \leq i \leq n$, where $\c_i$ is the \nth{i} column of $\M$. If there are two identical columns $\c_i$ and $\c_j$, then $\langle \u, \c_i \rangle = \langle \u, \c_j \rangle$ always holds.
    
    Now we analyze the probability that $\langle \u, \c_i \rangle = \langle \u, \c_j \rangle$ holds for two columns that are not equal. For convenience, let $\v = \c_i - \c_j$. Since $\c_i \neq \c_j$, we know that $\v \neq \boldsymbol{0}$. Assume $v_k \neq 0$ for some index $k$ such that $1 \leq k \leq n$. When querying over $\mathbb{R}$, we have that
    \begin{align*}
        \langle \u, \c_i \rangle - \langle \u, \c_j \rangle
        = \langle \u, \c_i - \c_j \rangle = \langle \u, \v \rangle
        = \sum_{i=1}^n u_iv_i = u_kv_k + \sum_{i \neq k} u_iv_i.
    \end{align*}
    Since $u_k \sim N(0, 1)$ and $v_k \neq 0$, we have that $u_kv_k + \sum_{i \neq k} u_iv_i = 0$ with probability 0, which means that $\langle \u, \c_i \rangle \neq \langle \u, \c_j \rangle$ with probability one. Therefore, $O(n)$ queries suffice over $\R$ to detect identical columns with probability one.
    
    Working over the field $\F_2$,  we see that $u_kv_k + \sum_{i \neq k} u_iv_i = 0$ with probability $\frac{1}{2}$. This means that $\langle \u, \c_i \rangle = \langle \u, \c_j \rangle$ with probability $\frac{1}{2}$. If we choose $\log(n^2/\varepsilon) = O(\log(n / \varepsilon))$ independent vectors~$\u$, then this equality holds for every $\u$ with probability $\frac{\varepsilon}{n^2}$. Since there are $\binom{n}{2} \leq n^2$ pairs $(i, j)$, the overall error probability is less than $\frac{\varepsilon}{n^2} \cdot n^2 = \varepsilon$ by a union bound. Therefore, the query complexity in the $\uMv$ model over $\F_2$ is $O(n\log(n / \varepsilon))$.
\end{proof}

\subsection{Majority}

Given an $n\times n$ matrix $\M$ over $\F_2$, we consider computing the {\em column-wise majority} of $\M$. That is, for each column, we compute whether it contains at least $n/2$ ones or not. 
We prove that $\Theta(n^2)$ queries are necessary and sufficient, even for randomized algorithms.

\begin{theorem}\thmlab{maj}
Let $\M \in \F_2^{n\times n}$ be a binary matrix. Computing the column-wise majority of~$\M$ requires $\Omega(n^2)$ queries, even for constant success probability.
\end{theorem}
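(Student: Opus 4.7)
The plan is to reduce from a direct sum of dense $\textsc{Disjointness}$ instances, in the same spirit as the earlier communication-complexity lower bounds of the paper (compare the proofs of \thmref{rank_F2} and \thmref{unitary_random}). The key point is that a single $\uMv$ query over $\F_2$ carries exactly one bit of information, so the communication-complexity lower bound translates directly to a query lower bound.

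I would set up the following two-party problem: Alice holds $x_1,\ldots,x_n \in \{0,1\}^{n/2}$ each with exactly $n/8$ ones, Bob holds $y_1,\ldots,y_n \in \{0,1\}^{n/2}$ under the same weight promise, and they must output, for every $j \in [n]$ simultaneously, whether $x_j \cap y_j = \emptyset$. By the information-complexity direct-sum technique of Bar-Yossef, Jayram, Kumar, and Sivakumar, solving $n$ independent instances of weight-promised $\textsc{Disjointness}$ on $(n/2)$-bit inputs with constant joint success probability requires $\Omega(n \cdot (n/2)) = \Omega(n^2)$ bits of randomized communication.

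To encode such an instance as a column-majority problem, Alice constructs $\A \in \F_2^{n \times n}$ by placing $x_j$ into rows $1,\ldots,n/2$ of column $j$ and filling rows $n/2+1,\ldots,n$ of every column with the same fixed (input-independent) pattern containing exactly $n/4+1$ ones; Bob constructs $\B$ by placing $y_j$ into rows $1,\ldots,n/2$ of column $j$ and leaving the remaining rows zero. For $\M = \A + \B$ over $\F_2$, column $j$ has Hamming weight
\[
|x_j \oplus y_j| + \tfrac{n}{4} + 1 \;=\; \tfrac{n}{2} + 1 - 2\lvert x_j \cap y_j\rvert,
\]
so its majority bit (``at least $n/2$ ones'') equals $1$ if $x_j$ and $y_j$ are disjoint and $0$ otherwise. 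Thus computing the full column-majority vector solves all $n$ disjointness instances.

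Finally, since $\u^{\T} \M \v = \u^{\T} \A \v \oplus \u^{\T} \B \v$ over $\F_2$, each adaptive, randomized $\uMv$ query can be simulated by exchanging $O(1)$ bits (using public randomness to realize the query algorithm), so a $q$-query algorithm yields a protocol of cost $O(q)$; combining with the direct-sum lower bound gives $q = \Omega(n^2)$. The principal technical obstacle I anticipate is invoking the direct-sum step cleanly under the exact weight promise used in the encoding: if a black-box statement for $\textsc{Disjointness}^n$ with $n/8$-weight inputs is unavailable in that precise form, I would fall back to the internal information cost of a single weight-promised $\textsc{Disjointness}$ instance and use the standard additivity of information cost under the product input distribution to recover the $\Omega(n^2)$ bound.
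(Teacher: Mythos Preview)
Your proposal is correct and follows essentially the same route as the paper: reduce from a direct sum of $n$ weight-promised $\textsc{Disjointness}$ instances, encode each instance into a column so that the $\F_2$ sum $\M = \A + \B$ has column weight crossing the $n/2$ threshold exactly when the corresponding pair is disjoint, and use that each $\uMv$ query over $\F_2$ costs $O(1)$ bits to simulate. The paper's encoding is marginally simpler—it takes full-length $n$-bit strings with $n/4$ ones and sets the columns to $\x_i + \y_i$ directly (so disjoint pairs give weight exactly $n/2$ with no padding needed)—but the argument and the cited direct-sum machinery are the same.
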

\begin{proof}
    We reduce this problem to $\textsc{Disjointness}$. Assume Alice has $n$ binary strings of length $n$, i.e. $x_1, x_2, \ldots, x_n$, each of which contains exactly $\frac{n}{4}$ 1s. Bob has  $n$ binary strings of length $n$, i.e. $y_1, y_2, \ldots, y_n$, each of which contains exactly $\frac{n}{4}$ 1s as well. We define $f : \{0, 1\}^n \times \{0, 1\}^n \rightarrow \{0, 1\}$ as follows:
    \begin{equation*}
    f(x, y) = 
    \left\{
    \begin{array}{cl}
         0 , & x \text{ and } y \text{ have non-empty intersection,}  \\
         1 , &\text{otherwise}.
    \end{array}
    \right. 
\end{equation*} 
    
    By a direct sum theorem in communication complexity, $\Omega(n^2)$ bits of communication are required to decide $(f(x_1, y_1), f(x_2, y_2), \ldots, f(x_n, y_n))$ simultaneously~\cite{molinaro2013beating}. Let $\x_i$ be the corresponding $n$-dimensional column vector of string $x_i$. Also, let $\y_i$ be the corresponding column vector of string $y_i$.   Alice and Bob construct matrices $\X$ and $\Y$, where
    \begin{equation*}
        \X = 
        \left[
        \begin{array}{cccc}
        \x_1 & \x_2 & \cdots & \x_n
        \end{array}
        \right]
\qquad \mbox{ and } \qquad
        \Y = 
        \left[
        \begin{array}{cccc}
        \y_1 & \y_2 & \cdots & \y_n
        \end{array}
        \right].
    \end{equation*}
    Let $\M = \X + \Y$. Then $x_i$ and $y_i$ intersect if and only if the majority of \nth{i} column of $\M$ is 0 since the elements are over $\F_2$. Furthermore, $\u^{\T}\M\v$ can be computed by the communication of $\u^{\T}\X\v$ and $\u^{\T}\Y\v$, each communication requiring one bit. Thus, the number of queries needed to decide the majority of every column is $q(n) = \Omega(n^2)$.
\end{proof}
    
    

\subsection{Permutation Matrix}

A matrix $\M \in \{0, 1\}^{n\times n}$ is a {\em permutation matrix} if each column and each row contains exactly one entry equal to 1. We consider the query complexity over both $\R$ and $\F_2$, which are very different. 

We observe checking if a graph $G$ is a perfect matching is equivalent to checking if the adjacency matrix is a permutation matrix. This also holds for the bipartite version: for a graph on $2n$ vertices, let $M_{ij} = 1$ when the \nth{i} vertex on the left is connected to the \nth{j} vertex on the right.

The following theorem states that $O(1)$ queries suffice over the reals to check whether $\M$ is a permutation matrix with constant probability.

\begin{theorem}\thmlab{permutation_upper}
    Let $\M \in \{0, 1\}^{n \times n}$ be a binary matrix over $\mathbb{R}$. Then, 
    $O(1)$ queries suffice to check whether $\M$ is a permutation matrix with constant probability.
\end{theorem}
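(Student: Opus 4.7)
The plan is to reduce permutation testing to two simple linear-algebraic identities. Observe that a $0/1$ matrix $\M$ is a permutation matrix if and only if (a) every row sum equals $1$, i.e.\ $\M\mathbf{1}=\mathbf{1}$, and (b) every column sum equals $1$, i.e.\ $\mathbf{1}^{\T}\M=\mathbf{1}^{\T}$: given entries in $\{0,1\}$, a row sum of $1$ forces exactly one $1$ per row, and analogously for columns, so (a) and (b) together characterize permutation matrices exactly.

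I would then test each of these vector equalities against the known vector $\mathbf{1}$ using a single random probe. Concretely, sample $\boldsymbol{g}\in\R^n$ with i.i.d.\ $N(0,1)$ entries, issue the two $\uMv$ queries $q_1 = \boldsymbol{g}^{\T}\M\mathbf{1}$ and $q_2 = \mathbf{1}^{\T}\M\boldsymbol{g}$, compute $s = \boldsymbol{g}^{\T}\mathbf{1}$ locally from $\boldsymbol{g}$ (no query is needed since both $\boldsymbol{g}$ and $\mathbf{1}$ are known to the algorithm), and accept iff $q_1 = s$ and $q_2 = s$.

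For correctness, if $\M$ is a permutation matrix then $\M\mathbf{1}=\mathbf{1}$ and $\mathbf{1}^{\T}\M=\mathbf{1}^{\T}$, so $q_1=q_2=s$ with probability $1$. Conversely, if $\M$ fails (a) or (b), then at least one of $\w_1 := \M\mathbf{1}-\mathbf{1}$ or $\w_2 := (\mathbf{1}^{\T}\M-\mathbf{1}^{\T})^{\T}$ is a nonzero real vector; the corresponding difference $q_i - s$ equals $\boldsymbol{g}^{\T}\w_i$, which is a Gaussian random variable with variance $\|\w_i\|_2^2 > 0$ and therefore nonzero with probability $1$. Hence only two $\uMv$ queries suffice, with success probability $1$.

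The only real subtlety to flag is whether continuous Gaussian probes are admissible in the $\uMv$ model; this is consistent with the paper's earlier usage, e.g.\ the Gaussian probe in \thmref{all_one}. If one prefers query vectors with $O(\log n)$-bit integer entries, I would instead sample each $g_i$ uniformly from $\{1,\dots,n^c\}$ for a suitable constant $c$; then $\boldsymbol{g}^{\T}\w_i$ is a nontrivial degree-$1$ polynomial in the $g_i$, and Schwartz--Zippel gives the nonzero event with probability at least $1 - 1/n^{c-1}$, still using only two queries and achieving constant success probability. The main obstacle is essentially just spotting the clean characterization via the row- and column-sum vectors; once that is in hand the randomized check is immediate.
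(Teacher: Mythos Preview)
Your proof is correct and takes a genuinely different, cleaner route than the paper. The paper first checks that the total number of ones is $n$, then repeatedly partitions the columns (and, separately, the rows) into two random halves of size $n/2$ and tests whether each half contains exactly $n/2$ ones; if $\M$ is not a permutation matrix, it argues combinatorially that some column has zero ones and another has at least two, and with constant probability these land in different halves, causing an imbalance. This gives only constant success probability per trial and requires a constant number of repetitions.

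Your approach instead exploits the exact characterization $\M\mathbf{1}=\mathbf{1}$ and $\mathbf{1}^{\T}\M=\mathbf{1}^{\T}$ for $0/1$ matrices, and tests each identity with a single Gaussian inner product. This is strictly stronger: two queries, one-sided error, and success probability $1$ (or $1-n^{-\Omega(1)}$ with bounded integer probes). The paper's partition argument is more in the spirit of its other combinatorial proofs (and reuses essentially the same idea for doubly stochastic matrices and star graphs), but your linear-algebraic reduction is more direct and yields sharper constants. Your remark about admissibility of Gaussian probes is apt and consistent with the paper's own usage in \thmref{all_one} and \thmref{identical}.
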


\begin{proof}
    Using a single query $\u = \v = (1, 1, \ldots, 1 )^{\T}$, we first verify that $\M$ contains exactly $n$ ones. Assume this holds. Also, assume without loss of generality that $n$ is even.
    
    We first describe an algorithm to test with constant probability whether each column contains a single one. Reversing the roles of columns and rows will establish the same for rows. The algorithm repeats the following process a constant number of times. Randomly select a subset $A \subseteq [n]$ of exactly $n/2$ columns. Let $\u$ be the all ones vector, and let $\v = \mathbf{1}_A$ and $\v' = \mathbf{1}_{A \setminus [n]}$ be the indicator vectors for $A$ and its complement. Reject if either $\u^{\T}\M \v \neq n/2$ or $\u^{\T}\M \v' \neq n/2$.      
    
    If $\M$ is a permutation matrix, then $\u^{\T}\M \v = \u^{\T}\M \v' = n/2$ holds. 
    If $\M$ is not a permutation matrix, there must be a pair of columns (or rows), one with all zeros, and one with at least two ones.  Suppose column $c$ contains all zeros, and column $c'$ contains at least two ones. 
    With constant probability in choosing $A$, we have $c \in A$ and $c' \notin A$ or vice versa. Conditioned on this, we claim that either $\u^{\T}\M \v \neq n/2$ or $\u^{\T}\M \v' \neq n/2$ with constant probability as well. 
    
    To see this, consider randomly partitioning the $n-2$ columns (excluding $c$ and $c'$) into two groups of size $\frac{n}{2} - 1$.  Let $s_1$ and $s_2$ be the number of ones in Groups 1 and 2, respectively. Without loss of generality, assume $s_1 \leq s_2$. Now, consider adding $c$ and $c'$ to the two groups, conditioned on them being separated. If $c'$ is in Group 2, then Group 2 will have more ones than Group 1. Thus, one of the groups must not have $n/2$ ones, and our algorithm rejects with constant probability.
\end{proof}

Interestingly, the query complexity depends on the field. If $\M \in \{0, 1\}^{n \times n}$ is over $\F_2$, then $O(1)$ queries are far from enough.

\begin{theorem} \thmlab{permutation_lower}
    Let $\M \in \F_2^{n \times n}$ be a matrix. Then, $\Omega(n)$ queries are necessary to determine whether~$\M$ is a permutation matrix with constant probability.
\end{theorem}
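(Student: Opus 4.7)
The plan is to reduce from \textsc{Disjointness}, taking advantage of the fact that each $\uMv$ query over $\F_2$ returns only a single bit. Recall that deciding whether $x, y \in \{0,1\}^m$ satisfy $\exists i$ with $x_i = y_i = 1$ requires $\Omega(m)$ bits of randomized communication.

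The key ingredient is a small $3 \times 3$ gadget: I will exhibit matrices $\boldsymbol{X}_0, \boldsymbol{X}_1, \boldsymbol{Y}_0, \boldsymbol{Y}_1 \in \F_2^{3 \times 3}$ such that $\boldsymbol{X}_a + \boldsymbol{Y}_b$ is a permutation matrix precisely when $(a,b) \neq (1,1)$. Concretely, take $\boldsymbol{X}_0 = \mathbf{0}$, $\boldsymbol{Y}_0 = \I_3$, $\boldsymbol{X}_1 = \I_3 + \boldsymbol{P}_{12}$, and $\boldsymbol{Y}_1 = \boldsymbol{P}_{13}$, where $\boldsymbol{P}_{ij}$ is the $3 \times 3$ permutation matrix of the transposition swapping indices $i$ and $j$. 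Then over $\F_2$ we have $\boldsymbol{X}_0 + \boldsymbol{Y}_0 = \I_3$, $\boldsymbol{X}_1 + \boldsymbol{Y}_0 = \boldsymbol{P}_{12}$, and $\boldsymbol{X}_0 + \boldsymbol{Y}_1 = \boldsymbol{P}_{13}$, all permutation matrices, whereas $\boldsymbol{X}_1 + \boldsymbol{Y}_1 = \I_3 + \boldsymbol{P}_{12} + \boldsymbol{P}_{13}$ has its top row equal to $(1,1,1)$ and hence is not. A short case check shows that no analogous $2 \times 2$ gadget exists over $\F_2$, since the XOR of any odd number of $2 \times 2$ permutation matrices is again a permutation matrix; this is why blocks of size three are required.

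With the gadget in hand, set $m = \lfloor n/3 \rfloor$. From a Disjointness instance $(x, y) \in \{0,1\}^m \times \{0,1\}^m$, Alice forms the block-diagonal matrix $\A \in \F_2^{n \times n}$ whose $i$-th $3 \times 3$ diagonal block equals $\boldsymbol{X}_{x_i}$, padding the trailing $n - 3m \leq 2$ rows and columns with a fixed identity block; Bob forms $\B$ analogously using $\boldsymbol{Y}_{y_i}$. By the gadget property, $\M = \A + \B$ is a permutation matrix if and only if each of its blocks is, which happens if and only if $x$ and $y$ are disjoint. Any constant-error $q$-query permutation tester $\mathcal{A}$ over $\F_2$ then yields a protocol of total communication $q + O(1)$ bits: Alice simulates $\mathcal{A}$ on $\M$, computes $\u^{\T} \A \v \in \F_2$ locally at each query $(\u, \v)$, receives the single bit $\u^{\T} \B \v$ from Bob, and XORs to recover $\u^{\T} \M \v$. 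Applying the $\Omega(m)$ Disjointness bound gives $q = \Omega(n)$.

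The main obstacle in carrying this out is constructing the gadget at all, since natural additive encodings over $\F_2$ tend to make $\A + \B$ depend on the symmetric difference $x \oplus y$ rather than on the intersection $x \wedge y$; the choice above succeeds exactly because $\boldsymbol{X}_1$ and $\boldsymbol{Y}_1$ are placed asymmetrically so that adding both perturbations destroys the permutation structure while adding either alone preserves it.
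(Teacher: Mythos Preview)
Your proof is correct and follows essentially the same approach as the paper: reduce from \textsc{Disjointness} by encoding each coordinate into a $3\times 3$ diagonal block so that $\M=\A+\B$ is a permutation matrix iff $x$ and $y$ are disjoint, and then use that each $\uMv$ query over $\F_2$ costs $O(1)$ bits to simulate. The paper's specific gadget is different (Alice's blocks are the $3$-cycle or $\I_3$, Bob's are $\boldsymbol{0}$ or $J_3-\I_3$), but yours works just as well and your remark that no $2\times 2$ gadget exists is a nice addition; one small nit is that if \emph{both} players pad the trailing $n-3m$ block with $\I$ then the sum there is $\boldsymbol{0}$, so let only Alice pad with $\I$ and Bob with $\boldsymbol{0}$ (or simply assume $3\mid n$).
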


\begin{proof}
    We reduce this problem to $\textsc{Disjointness}$. Alice holds a string $x \in \{0,1\}^n$ and Bob holds a string $y \in \{0,1\}^n$. Now Alice constructs a $3n\times 3n$ matrix 
    \begin{gather*}
        \A = 
        \left[
        \begin{array}{cccc}
             \A_1 & \boldsymbol{0} & \cdots & \boldsymbol{0} \\
             \boldsymbol{0} &  \A_2 & \cdots & \boldsymbol{0} \\
             \vdots & \vdots & \ddots & \vdots \\
             \boldsymbol{0} & \boldsymbol{0} & \cdots & \A_n 
        \end{array}
        \right]
    \qquad \mbox{ where } \qquad 
        \A_i = 
        \left\{
        \begin{array}{cc}
           \left[
           \begin{array}{ccc}
                0 & 0  & 1 \\
                1 & 0  & 0\\
                0 & 1  & 0
           \end{array}
           \right]  & \mbox{if } x_i = 0 \\ \\
            \left[
           \begin{array}{ccc}
                1 & 0  & 0 \\
                0 & 1  & 0\\
                0 & 0  & 1
           \end{array}
           \right] & \mbox{if } x_i = 1.
        \end{array}
        \right.
    \end{gather*}
    Bob constructs a $3n\times 3n$ matrix 
    \begin{gather*}
        \B = 
        \left[
        \begin{array}{cccc}
             \B_1 & \boldsymbol{0} & \cdots & \boldsymbol{0} \\
             \boldsymbol{0} &  \B_2 & \cdots & \boldsymbol{0} \\
             \vdots & \vdots & \ddots & \vdots \\
             \boldsymbol{0} & \boldsymbol{0} & \cdots & \B_n 
        \end{array}
        \right]
    \qquad \mbox{ where } \qquad 
        \B_i = 
        \left\{
        \begin{array}{cc}
           \left[
           \begin{array}{ccc}
                0 & 0  & 0 \\
                0 & 0  & 0\\
                0 & 0  & 0
           \end{array}
           \right]  & \mbox{if } y_i = 0 \\ \\
            \left[
           \begin{array}{ccc}
                0 & 1  & 1 \\
                1 & 0  & 1\\
                1 & 1  & 0
           \end{array}
           \right] & \mbox{if } y_i = 1.
        \end{array}
        \right.
    \end{gather*}
    Let $\M = \A + \B$, with addition over $\F_2$. Then $\M$ is a permutation matrix if and only if $x$ and $y$ are disjoint. Thus, the query complexity is $\Omega(n)$ since $\u^{\T}\A\v$ and $\u^{\T}\B\v$ are both a single bit. 
\end{proof}

\subsection{Doubly Stochastic Matrix}

A non-negative real-valued matrix is {\em doubly stochastic} if all rows and columns sum to one. Similar to permutation matrices, testing if a matrix is doubly stochastic only needs $O(1)$ queries over $\R$.

\begin{theorem} \thmlab{double_stochastic_upper}
    Let $\M \in (\mathbb{R}^+ \cup \{ 0\})^{n\times n}$ be a non-negative real matrix. Then $O(1)$ queries suffices to check whether $\M$ is doubly stochastic with constant probability.
\end{theorem}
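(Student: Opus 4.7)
The plan is to observe that, under the non-negativity hypothesis already in place, $\M$ is doubly stochastic if and only if every row sum and every column sum of $\M$ equals $1$. Letting $\mathbf{1} \in \R^n$ denote the all-ones vector, this amounts to verifying the two vector equalities $\M\mathbf{1} = \mathbf{1}$ and $\M^{\T}\mathbf{1} = \mathbf{1}$. I will check each of these with a single $\uMv$ query that pairs a fresh random Gaussian vector with $\mathbf{1}$ on the other side.

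First I would sample $\u \in \R^n$ with entries drawn i.i.d.\ from the standard Gaussian $N(0,1)$, and issue the query $\u^{\T}\M\mathbf{1}$. Since $\u$ is chosen by the algorithm, the scalar $\sum_{i=1}^n u_i$ is known without spending a query, and
\[
\u^{\T}\M\mathbf{1} - \sum_{i=1}^n u_i \;=\; \u^{\T}\pth{\M\mathbf{1} - \mathbf{1}}.
\]
If $\M\mathbf{1} = \mathbf{1}$ then this difference is identically $0$; otherwise $\M\mathbf{1} - \mathbf{1}$ is a fixed nonzero vector, so $\u^{\T}(\M\mathbf{1} - \mathbf{1})$ is a nondegenerate univariate Gaussian and therefore equals $0$ with probability $0$. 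The column-sum check is symmetric: sample a fresh Gaussian $\v$, query $\mathbf{1}^{\T}\M\v$, and compare the result to $\sum_{j=1}^n v_j$. Accept only if both comparisons succeed.

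The total query count is $2 = O(1)$, and each test is correct with probability $1$, so the overall error probability is $0$, which is stronger than the constant probability demanded. I do not anticipate any real obstacle; the argument simply reuses the Gaussian anti-concentration idea already exploited in \thmref{all_one} and \thmref{identical}. The one subtlety worth flagging is that non-negativity is essential to the reduction in the first step, since a matrix with negative entries can satisfy $\M\mathbf{1} = \M^{\T}\mathbf{1} = \mathbf{1}$ without being doubly stochastic; but this assumption is built into the theorem statement, so nothing further needs to be verified by the algorithm.
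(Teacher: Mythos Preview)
Your proof is correct and takes a genuinely different route from the paper's. The paper mirrors its permutation-matrix argument: first query with $\u=\v=\mathbf{1}$ to verify the total sum is $n$, then randomly split the columns (and rows) into two halves of size $n/2$ and check whether each half sums to $n/2$; if the matrix is not doubly stochastic, some column sums to more than $1$ and another to less than $1$, and with constant probability these land in different halves, so the test rejects with constant probability. Your approach instead tests the vector identities $\M\mathbf{1}=\mathbf{1}$ and $\M^{\T}\mathbf{1}=\mathbf{1}$ directly via a Gaussian inner product, exactly as in \thmref{all_one}. This is both cleaner and strictly stronger: you use two queries and succeed with probability~$1$, whereas the paper's partitioning argument only guarantees constant success probability per round. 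The paper's approach has the virtue of reusing the permutation-matrix machinery verbatim, but your argument is the more direct one here.
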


\begin{proof}
    The argument is similar to permutation matrix. First, check whether the sum of all entries is $n$ by choosing $\u = \v = (1, 1, \ldots, 1 )^{\T}$ and checking whether $\u^{\T}\M\v = n$. We assume the equality holds. If $\M$ is not doubly stochastic, then some column $c$ (or row $r$) should have sum $>$ 1 and another column $c'$ (or row $r'$ respectively) should have sum $<$ 1. Partition the columns (or rows) into two groups of size $\frac{n}{2}$, then by the same argument as the proof of \thmref{permutation_upper} , the sum of two groups will not be equal with constant probability.
    
\end{proof}

\subsection{Matrix with Negative Entries}

In our previous result for doubly stochastic matrices, we assumed that all the entries are non-negative. This assumption is necessary. If we allow negative entries in a matrix, then even checking whether or not there exists a negative entry requires $\Omega(n^2 / \log n)$ queries.

\begin{theorem} \thmlab{negative}
   Let $\M \in \mathbb{R}^{n\times n}$ be a matrix. Then $\Omega(n^2 / \log n)$ queries are necessary to test if $\M$ contains a negative entry using query vectors with entries in $\{0, \pm 1, \pm 2, \ldots, \pm n^c \}$ for a constant $c$.  
\end{theorem}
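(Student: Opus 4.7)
The plan is a direct reduction from $\textsc{Disjointness}$ of size $n^2$, whose randomized communication complexity is $\Omega(n^2)$. Alice holds $x \in \{0,1\}^{n^2}$ and Bob holds $y \in \{0,1\}^{n^2}$, both indexed by pairs $(i,j) \in [n] \times [n]$. The goal is to build matrices $\A, \B$ depending only on $x, y$ respectively so that $\M = \A + \B$ has a negative entry if and only if there exists $(i,j)$ with $x_{ij} = y_{ij} = 1$.

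The key gadget is entry-wise. Define
\[
\A_{ij} = -x_{ij}, \qquad \B_{ij} = 1 - y_{ij}, \qquad \M_{ij} = \A_{ij} + \B_{ij} = 1 - x_{ij} - y_{ij}.
\]
Then $\M_{ij} \in \{-1, 0, 1\}$, and $\M_{ij} < 0$ if and only if $x_{ij} = y_{ij} = 1$. Hence detecting a negative entry in $\M$ is equivalent to deciding whether $x$ and $y$ intersect.

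Suppose an algorithm answers the negative-entry question with constant probability using $q(n)$ adaptive $\uMv$ queries, where every query vector has integer entries in $\{0, \pm 1, \ldots, \pm n^c\}$. For each query $\u^{\T}\M\v$, Alice computes $\u^{\T}\A\v$ and Bob computes $\u^{\T}\B\v$; each is an integer bounded in absolute value by $n^{2c+2}$ because $\A$ and $\B$ have entries in $\{-1, 0, 1\}$. Thus exchanging one query value costs $O(\log n)$ bits, and the two players can simulate the algorithm with $O(q(n)\log n)$ bits of communication (shared randomness is free).

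Since the randomized communication complexity of $\textsc{Disjointness}$ on inputs of length $n^2$ is $\Omega(n^2)$ \cite{kalyanasundaram1992probabilistic, razborov1992distributional}, we get $q(n) \cdot O(\log n) = \Omega(n^2)$, hence $q(n) = \Omega(n^2/\log n)$. The only place where one needs to be careful is verifying that the gadget above really does reduce \emph{intersection} (not its complement) to the presence of a negative entry and that the intermediate values communicated remain polynomially bounded; both are straightforward from the explicit construction. No step is substantially harder than this bookkeeping.
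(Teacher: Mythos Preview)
Your proposal is correct and essentially identical to the paper's proof: both reduce from $\textsc{Disjointness}$ on $n^2$ coordinates by defining $\M_{ij}=1-x_{ij}-y_{ij}$ (the paper splits this as $\A_{ij}=1-x_{ij}$, $\B_{ij}=-y_{ij}$ rather than your $\A_{ij}=-x_{ij}$, $\B_{ij}=1-y_{ij}$, but this is cosmetic), and both finish by noting each query value can be exchanged with $O(\log n)$ bits. Your write-up is slightly more explicit about the magnitude bound on $\u^{\T}\A\v$ and $\u^{\T}\B\v$, which the paper leaves implicit.
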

\begin{proof}
    We reduce this problem to $\textsc{Disjointness}$.
    Alice holds a bit-string $x$ with size $n^2$, and Bob holds a bit-string $y$ with the same size. Alice and Bob construct $n \times n$ matrices $\A$ and $\B$, where
    \begin{gather*}
        A_{ij} = 
        \left\{
        \begin{array}{rl}
            1, & x_{(i-1)n+j} = 0 \\
            0, &  x_{(i-1)n+j} = 1
        \end{array}
        \right.
\qquad \mbox{ and } \qquad
        B_{ij} = 
        \left\{
        \begin{array}{rl}
            0, & y_{(i-1)n+j} = 0 \\
            -1, &  y_{(i-1)n+j} = 1
        \end{array}
        \right.
    \end{gather*}
Let $\M = \A+\B$, and notice that $\M$ contains negative entries if and only if $x$ and $y$ intersect. If the query complexity is $q(n)$, then by  the $\textsc{Disjointness}$ lower bound, $q(n) \log n = \Omega(n^2)$.
\end{proof}

\section{Graph Problems}
\seclab{graphs}

\subsection{Triangle Detection}

Triangle detection task means that a simple graph $G$ is given in the form of adjacency matrix $\M \in \{0, 1\}^{n\times n}$, where $n$ is the number of vertices in $G$, and we want to decide whether there exists a triangle, i.e. there exists $1 \leq i < j < k \leq n$, such that $M_{ij} = M_{jk} = M_{ki} = 1$. The following theorem shows a lower bound on the number of $\uMv$ queries to detect a triangle. 

\begin{theorem} \thmlab{triangle}
    Given a simple graph $G$ consisting of $n$ vertices in the form of its adjacency matrix $\M \in \{0, 1\}^{n\times n}$, then even with a constant probability, $\Omega(n^2/ \log n)$ queries are necessary to determine whether there exists a triangle in $G$.
\end{theorem}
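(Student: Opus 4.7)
The plan is to reduce from $\textsc{Disjointness}$ on inputs of length $\Theta(n^2)$, whose randomized communication complexity is $\Omega(n^2)$, to triangle detection on a graph with $3n$ vertices. Interpret Alice's string $x \in \{0,1\}^{n^2}$ and Bob's string $y \in \{0,1\}^{n^2}$ as being indexed by pairs $(i,j) \in [n] \times [n]$. I will build a tripartite graph $G$ on vertex set $A \cup B \cup C$ with $A = \{a_1,\ldots,a_n\}$, $B = \{b_1,\ldots,b_n\}$, $C = \{c_1,\ldots,c_n\}$, where the edge $\{a_i, b_j\}$ appears iff $x_{ij} = 1$, the edge $\{b_j, c_i\}$ appears iff $y_{ij} = 1$, and the ``diagonal'' edges $\{a_i, c_i\}$ are included for every $i \in [n]$.

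Because $G$ has no edges internal to $A$, $B$, or $C$, every triangle must use exactly one vertex from each part, and the only available $A$–$C$ edges are the diagonal ones. Consequently, the triangles in $G$ are precisely the triples $\{a_i, b_j, c_i\}$ such that $x_{ij} = y_{ij} = 1$, so $G$ contains a triangle iff the disjointness instance has a nonempty intersection. This is the main structural step, and once set up it is essentially forced; the care is just in checking there are no other triangle configurations.

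For the simulation, write the adjacency matrix as $\M = \M_A + \M_B + \M_C$, where Alice holds $\M_A$, Bob holds $\M_B$, and both players know the fixed matrix $\M_C$ encoding the diagonal edges. Each $\uMv$ query is evaluated by having Alice send the integer $\u^{\T}\M_A\v$ and Bob send $\u^{\T}\M_B\v$; under the paper's standing $O(\log n)$ bit-complexity assumption on query vectors, each of these scalars fits in $O(\log n)$ bits, and both players can then compute $\u^{\T}\M\v$ locally. A randomized $\uMv$ algorithm that detects a triangle with constant success probability using $q$ queries therefore yields an $O(q\log n)$-bit protocol for $\textsc{Disjointness}$ on $n^2$ bits, forcing $q \cdot O(\log n) = \Omega(n^2)$ and hence $q = \Omega(n^2/\log n)$. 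Since the graph has $N = 3n$ vertices and $\Omega(n^2/\log n) = \Omega(N^2/\log N)$, the bound transfers to graphs on $N$ vertices, yielding the theorem.

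The only delicate point is double-checking bit complexity: the magnitudes of $\u^{\T}\M_A\v$ and $\u^{\T}\M_B\v$ are bounded by $n^2 \cdot \poly(n) = \poly(n)$, so $O(\log n)$ bits per query suffice, which is essential for obtaining the $\log n$ factor in the denominator rather than a constant. No additional machinery beyond the standard $\textsc{Disjointness}$ lower bound and the tripartite gadget is needed.
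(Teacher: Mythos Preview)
Your proof is correct. The tripartite gadget is sound: any triangle in $G$ must use one vertex from each of $A,B,C$, and since the only $A$--$C$ edges are the diagonals $\{a_i,c_i\}$, triangles are exactly the triples $\{a_i,b_j,c_i\}$ with $x_{ij}=y_{ij}=1$. The simulation step is also fine, including the handling of adaptivity (both parties learn $\u^{\T}\M\v$ after exchanging their pieces, so they can compute the next query vectors) and the bit-count under the paper's $O(\log n)$ bit-complexity assumption.

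The paper takes a shorter but less self-contained route: it invokes as a black box the known two-party communication lower bound of $\Omega(n^2)$ for triangle detection when the edge set of an $n$-vertex graph is split arbitrarily between Alice and Bob~\cite{bar2002reductions}, and then simulates each $\uMv$ query with $O(\log n)$ bits. Your argument instead reduces directly from $\textsc{Disjointness}$ on $n^2$ coordinates via the explicit tripartite construction, effectively re-deriving (a version of) the cited communication result rather than citing it. The upside of your approach is that it is entirely self-contained within the tools the paper already introduces; the paper's version is terser but leans on an external lower bound. Both yield the same $\Omega(n^2/\log n)$ bound.
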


\begin{proof}
    We reduce this problem to a communication complexity problem, that is \cite{bar2002reductions}, given a graph $G$ with $n$ vertices, where Alice holds some edges of $G$, and Bob holds the remaining edges of $G$, then $\Omega(n^2)$ bits of communication is required to determine whether there exists a triangle in $G$, even in the random case with a constant probability.     
    
    Now suppose the graph is $G$, and its adjacency matrix is $\M \in \{0, 1\}^{n \times n}$. Alice holds some edges represented by the matrix $\X$, and Bob holds the remaining edges represented by the matrix $\Y$. Obviously $\M = \X + \Y$. Then Alice and Bob can communicate by sending $\u^{\T}\X\v$ and $\u^{\T}\Y\v$, and $\u^{\T}\M\v$ can be obtained immediately since $\u^{\T}\M\v = \u^{\T}\X\v + \u^{\T}\Y\v$. Assume $q(n)$ queries can determine whether there exists a triangle, then $q(n)\log n = \Omega(n^2)$. Thus $q(n) = \Omega(n^2 / \log n)$.
    
\end{proof}

\subsection{Deciding if a Graph is a Star}

A {\em star} is a tree where there exists one vertex adjacent to all the other vertices. Given the adjacency matrix of a graph $G$, how many queries do we need to decide whether $G$ is a star?

\begin{theorem} \thmlab{star}
    $\M \in \{0, 1\}^{n\times n}$ is the adjacency matrix of a simple graph $G$. Then $O(1)$ queries suffice to determine whether $G$ is a star with constant probability over $\R$.
\end{theorem}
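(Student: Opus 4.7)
The plan is to exploit the structural fact that a simple graph with exactly $n-1$ edges is a star if and only if some vertex has degree $n-1$, since such a vertex is adjacent to all others and therefore already accounts for every edge. Hence it suffices to verify the edge count and then identify and test a candidate center. I would carry this out with three adaptive $\uMv$ queries.

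The first query would be $\mathbf{1}^{\T}\M\mathbf{1}$, rejected unless the answer equals $2(n-1)$; since $\M$ is a $0/1$ symmetric matrix with zero diagonal, this checks $|E(G)| = n-1$. For the second query, draw $\u \in \R^{n}$ with i.i.d.\ standard Gaussian entries and ask $q = \mathbf{1}^{\T}\M\u = \sum_{j} d_{j} u_{j}$, where $d_{j}$ denotes the degree of vertex $j$. The idea is that if $G$ is a star with center $c$, then $\M \mathbf{1} = \mathbf{1} + (n-2)\e_{c}$, so $q - \mathbf{1}^{\T}\u = (n-2) u_{c}$; I would compute $t = (q - \mathbf{1}^{\T}\u)/(n-2)$ and search for an index $c \in [n]$ with $u_{c} = t$, rejecting if no such $c$ exists and otherwise declaring $c$ the candidate center. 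The third query would be $\e_{c}^{\T}\M\mathbf{1}$, accepted iff it equals $n-1$.

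Correctness in the star case is immediate: the derivation gives $t = u_{c^{*}}$ exactly for the true center $c^{*}$, and since the Gaussian entries are almost surely distinct, $c^{*}$ is the unique candidate and the third query confirms $d_{c^{*}} = n-1$. For the non-star case with $|E| = n-1$, no vertex has degree $n-1$, and the candidate step must reject. For any fixed $c$, the condition $u_{c} = t$ rearranges to $\sum_{j}\bigl[(n-2)\delta_{c,j} - (d_{j} - 1)\bigr] u_{j} = 0$, and the coefficient vector here vanishes identically exactly when $d_{c} = n-1$ and $d_{j} = 1$ for all $j \neq c$, i.e., when $G$ is a star centered at $c$. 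Thus for a non-star the coefficient vector is nonzero for every $c$, so each condition holds with probability zero under Gaussian $\u$, and a union bound over $c \in [n]$ rules out any false candidate almost surely.

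The hard part is the analysis of step 2: isolating the structural claim that the linear form above vanishes identically only at the star degree sequence, then combining it with the fact that any nontrivial linear combination of i.i.d.\ Gaussians is nonzero almost surely. As a backup, replacing the Gaussian by $\u$ sampled uniformly from $\{0, 1, \ldots, N\}^{n}$ for $N = \mathrm{poly}(n)$ and applying Schwartz--Zippel to the same nontrivial linear polynomial yields constant success probability using the same three queries.
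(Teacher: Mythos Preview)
Your proof is correct and takes a genuinely different route from the paper's argument. The paper first checks the edge count, then repeatedly draws a uniformly random balanced partition of the vertex set into two halves and queries the degree sum in each half; it accepts only if these sums are $\tfrac{n}{2}$ and $\tfrac{3n}{2}-2$. The analysis then shows that for a non-star with $n-1$ edges one can find two vertices of different degree whose gap is strictly less than $n-2$, and a short case analysis gives a constant probability that the partition's degree sums miss the star pattern. Your approach instead extracts a candidate center directly: from the single linear sketch $\mathbf{1}^{\T}\M\u=\sum_j d_j u_j$ you solve for the unique index $c$ with $u_c=(q-\mathbf{1}^{\T}\u)/(n-2)$, and your structural lemma---that the linear form $\sum_j\bigl[(n-2)\delta_{c,j}-(d_j-1)\bigr]u_j$ is identically zero iff $G$ is a star centered at $c$---cleanly reduces the non-star case to the almost-sure nonvanishing of a nontrivial Gaussian linear combination. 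What this buys you is exact (probability-one) correctness with three adaptive queries under continuous randomness, whereas the paper's partition test has only constant rejection probability per round; conversely, the paper's argument is purely combinatorial and avoids any appeal to continuous distributions. Your Schwartz--Zippel backup with $\u\in\{0,\ldots,N\}^n$ is a nice touch that keeps the bit complexity bounded while retaining the same three-query structure.
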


\begin{proof}
First, check whether $\M$ contains exact $2(n-1)$ ones. If not, $\M$ is obviously not a star. 

Now we assume that $\M$ contains exact $2(n-1)$ ones, which means $G$ contains $(n-1)$ edges. Equally divide the vertices into 2 groups of size $\frac{n}{2}$ randomly and uniformly. We only need to check whether the sum of degrees in one group is $\frac{n}{2}$, and another $\frac{3n}{2} - 2$. If this is true, the algorithm should report that $\G$ is a star. Otherwise, the algorithm reports that $\G$ is not a star.  We prove this algorithm has a constant error probability.

If $G$ is a star, then the sum of degrees in one group is $\frac{n}{2}$, and the other is $\frac{3n}{2} - 2$. If $G$ is not a star, there exists two vertices $v_1$ and $v_2$ with different degrees, which satisfy $|\text{deg}(v_1) - \text{deg}(v_2)| < n-2$. Since $G$ is not a star, then the degree of any vertex can be at most $n-2$. 

\begin{itemize}
    \item If there exists a vertex $v_1$ with degree $n-2$, then when $n$ is large (e.g. $n > 10$), there must exists another vertex $v_2$ with degree 1. Therefore, $$|\text{deg}(v_1) - \text{deg}(v_2)| = (n-2) - 1 = n-3 < n-2.$$
    
    \item If there does not exists a vertex with degree $n-2$, then the degree of all vertices are in $\{0, 1, 2, \ldots , n-3\}$. When $n$ is large enough (e.g. $n>10$), there must exist two vertices $v_1$ and $v_2$ with different degrees, and $$|\text{deg}(v_1) - \text{deg}(v_2)| \leq (n-3) - 0 = n-3 < n-2.$$
    
\end{itemize}
 
Now with probability at least $\frac{1}{2}$, $v_1$ and $v_2$ are in different groups. Without loss of generality, assume that $\text{deg}(v_1) > \text{deg}(v_2)$.  Conditioned on this, we can decompose the random partition procedure into 2 steps. First, we randomly and uniformly partition the other $n-2$ vertices (except $v_1$ and $v_2$) into 2 groups with size $\frac{n}{2} - 1$. Assume that in Group 1 the sum of degrees of these vertices is $s_1$, and in Group 2 the sum is $s_2$. Without loss of generality, assume that $s_1 \leq s_2$. The second step is to place $v_1$ in Group 1, $v_2$ in Group 2, or $v_2$ in Group 2, $v_1$ in Group 1 both with probability $\frac{1}{2}$.

If we have that the following holds simultaneously,
$$\begin{array}{lll}
    s_1+\text{deg}(v_1) = \frac{3n}{2} - 2\text{, } & \quad & s_2+\text{deg}(v_2) = \frac{n}{2}\text{,} \\
   s_1+\text{deg}(v_2) = \frac{n}{2}\text{, } &\quad & s_2+\text{deg}(v_1) = \frac{3n}{2} - 2,
\end{array}$$
then $\text{deg}(v_1) - \text{deg}(v_2) = n-2$, a contradiction. 
Otherwise, if 
$$\begin{array}{lll}
    s_1+\text{deg}(v_1) = \frac{n}{2}\text{, } & \quad & s_2+\text{deg}(v_2) =  \frac{3n}{2} - 2\text{,} \\
   s_1+\text{deg}(v_2) = \frac{n}{2}\text{, } &\quad & s_2+\text{deg}(v_1) = \frac{3n}{2} - 2\text{, }
\end{array}$$
then $\text{deg}(v_1) = \text{deg}(v_2)$, a contradiction.
Also, since $s_1 \leq s_2, \text{deg}(v_1) > \text{deg}(v_2)$ and $\frac{3n}{2} - 2 \geq \frac{n}{2}$, it is impossible that $s_1+\text{deg}(v_2) = \frac{3n}{2} - 2$ and $s_2+\text{deg}(v_1) = \frac{n}{2}$ both hold.
Overall, with probability at least $\frac{1}{4}$, the sums of the two groups will not be $\frac{3n}{2} - 2$ and $\frac{n}{2}$.

\end{proof}

\subsection{Local Graph Queries and Estimating Subgraph Counts}
\seclab{local-graph}

\begin{lemma} \lemlab{local}
    Given the adjacency matrix $\M \in \{0, 1\}^{n \times n}$ of a simple graph $G = (V, E)$, then the following four queries can be implemented by $O(\log n)$ $\uMv$ queries over $\R$:
    \begin{itemize}
        \item $\textbf{Degree}$ $\textbf{query}$ $i$: the degree of vertex $i$. 
        \item $\textbf{Neighbor}$ $\textbf{query}$  $(i, j)$: the \nth{j} neighbor of vertex $i$.
        \item $\textbf{Edge Existence}$ $\textbf{query}$ $(i, j)$: whether the edge $(i, j)$ exists.
        \item $\textbf{Edge-sample}$ $\textbf{query}$: sample an edge $e$ uniformly at random from the edge set $E$.
    \end{itemize}
\end{lemma}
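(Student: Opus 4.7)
The plan is to handle each of the four queries in turn, implementing all of them with indicator query vectors over $\R$. Throughout, let $\boldsymbol{1}$ be the all-ones vector, $\e_i$ the $i$-th standard basis vector, and for $S \subseteq [n]$ let $\boldsymbol{1}_S = \sum_{i \in S} \e_i$. Since $\M$ is symmetric $\{0,1\}$-valued, inner products of the form $\boldsymbol{1}_S^{\T} \M \boldsymbol{1}_T$ simply count edges from $S$ to $T$ (with self-loops contributing twice when $S=T$).

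\textbf{Degree and edge existence.} A single $\uMv$ query suffices in each case. For the degree of vertex $i$, query $\e_i^{\T} \M \boldsymbol{1}$, which returns $\deg(i)$. For edge existence, query $\e_i^{\T}\M\e_j$, which returns $M_{ij}\in\{0,1\}$.

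\textbf{Neighbor query.} For a query $(i,j)$, I will binary search on the column index to locate the $j$-th $1$ in row $i$. More precisely, maintain an interval $[\ell,r]\subseteq[n]$ initially equal to $[1,n]$, together with an offset $j'$ initially equal to $j$. At each step compute
\[
c = \e_i^{\T}\M \boldsymbol{1}_{[\ell,\,\lfloor (\ell+r)/2 \rfloor]},
\]
which counts the neighbors of $i$ in the left half. If $c \ge j'$, recurse into the left half; otherwise recurse into the right half after decrementing $j'$ by $c$. Each step is one $\uMv$ query and halves $r-\ell$, so the total cost is $O(\log n)$.

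\textbf{Edge sampling.} I first obtain the edge count $m = \tfrac{1}{2}\,\boldsymbol{1}^{\T}\M\boldsymbol{1}$ with one query. The idea is to sample an endpoint $v$ with probability $\deg(v)/(2m)$ and then return a uniform neighbor of $v$; this produces a uniformly random edge since each edge $\{u,v\}$ is picked with total probability $\deg(u)/(2m)\cdot 1/\deg(u) + \deg(v)/(2m)\cdot 1/\deg(v) = 1/m$. Draw $k$ uniformly from $\{1,\ldots,2m\}$, then binary search for the smallest $v\in[n]$ with $\sum_{u\le v}\deg(u)\ge k$ by computing partial prefix sums $\boldsymbol{1}_{[1,t]}^{\T}\M\boldsymbol{1}$ for $O(\log n)$ choices of $t$. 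Finally, invoke the neighbor-query routine with index $k-\sum_{u<v}\deg(u)$ in row $v$, costing another $O(\log n)$ queries. The total is $O(\log n)$.

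\textbf{Main subtleties.} The computations are routine; the only thing to check is that every intermediate quantity really is a $\uMv$ product with $0/1$ vectors (which it is, since prefix sums and partial row sums of $\M$ are of the form $\boldsymbol{1}_S^{\T}\M\boldsymbol{1}_T$). The one point worth stating explicitly is that the edge-sampling reduction to two nested binary searches still fits in $O(\log n)$ queries, which is immediate because the two phases run sequentially rather than one inside the other.
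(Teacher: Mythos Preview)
Your proof is correct. The treatments of degree, edge existence, and neighbor queries coincide with the paper's (standard basis vectors and binary search on prefix indicators). For edge sampling, however, you take a genuinely different route: the paper recursively bisects the matrix itself, at each step splitting the current submatrix into two halves (column-wise, then row-wise), counting the ones in each half with a single $\uMv$ query, and descending into one half with probability proportional to its count until a single entry is reached. Your approach instead samples a vertex with probability proportional to its degree via binary search on the prefix degree sums $\boldsymbol{1}_{[1,t]}^{\T}\M\boldsymbol{1}$, then reuses the neighbor-query routine to pick a uniform neighbor. Both are correct and both cost $O(\log n)$ queries; yours has the mild advantage of reusing the neighbor subroutine and corresponding to the standard ``sample an endpoint, then a neighbor'' idiom, while the paper's recursive bisection is more self-contained and generalizes immediately to sampling a uniformly random nonzero entry of any nonnegative matrix (not just a symmetric adjacency matrix).
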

\begin{proof}
    We consider the four queries one by one.
     \begin{itemize}
        \item $\textbf{Degree}$ $\textbf{query}$ $i$: If $\u = (1, 1, \ldots, 1)^{\T}$ and $\v = \e_i$, then $\u^{\T}\M\v$ directly gives the answer.
        \item $\textbf{Neighbor}$ $\textbf{query}$  $(i, j)$: Define $\v^{(k)}$ where $v_l^{(k)} = 1$ when $l \leq k$, and $v_l^{(k)} = 0$ when $l > k$. Then use binary search to determine the \nth{j} neighbor of vertex $i$. First compute $c = \left(\v^{(n/2)}\right)^{\T}\M\e_i$. If $c < j$, then compute $\left(\v^{(3n/2)}\right)^{\T}\M\e_i$. Otherwise, compute $\left(\v^{(n/2)}\right)^{\T}\M\e_i$. By $O(\log n)$ iterations, we can obtain the \nth{j} neighbor of vertex $i$ exactly.
        
        \item $\textbf{Edge Existence}$ $\textbf{query}$ $(i, j)$: Let $\u = \e_i$, and $\v = \e_j$, then $\u^{\T}\M\v$ is the answer.
        \item $\textbf{Edge-sample}$ $\textbf{query}$: Recall that a single query can compute the number of ones in any submatrix. First, determine the number $m$ of ones in $\M$. Then, split $\M$ column-wise into two submatrices $\M_1$ and $\M_2$ with equal size, and compute the number of ones $m_1$ and $m_2$ contained in them. Next, choose $\M_1$ with probability $\frac{m_1}{m}$ or $\M_2$ with probability $\frac{m_2}{m}$, where $m=m_1+m_2$. Assume the chosen matrix is $\M_i$. Recursively perform the same procedure on $\M_i$. After $O(\log n)$ iterations, we obtain a 1 $\times$ 1 matrix with entry 1, which corresponds to a randomly sampled edge. By construction, each edge is chosen with the same probability.
    \end{itemize}
\end{proof}

As one application, we mention the problem of counting subgraphs. Given the adjacency matrix $\M \in \{0, 1\}^{n \times n}$ of a simple graph $G$, we want to estimate the number of occurrences of $H$ in $G$, where $H$ is a given subgraph (such as a triangle). Assadi et. al.~\cite{assadi2018simple} shows that with $$\wt O\left(\text{min}\left\{ m, \frac{m^{\rho(H)}}{\#H} \right\}\right)$$ of the above four standard graph queries, we can obtain a $(1\pm \varepsilon)$-approximation to the number of occurrences of $H$ in $G$ with high probability. Here, $\#H$ is the number of occurrences of $H$ in $G$, $m$ is the number of edges,  and $\rho(H)$ is the fractional edge-cover of $H$. Also, the $\wt O(\cdot)$ notation ignores $\varepsilon$ and $\log n$ terms, as well as the size of graph $H$. By \lemref{local}, the four standard graph queries can be implemented by $O(\log n)$ $\uMv$ queries. Therefore, we derive the following result.

\begin{proposition}
\proplab{counting-subgraphs}
     Given the adjacency matrix $\M \in \{0, 1\}^{n \times n}$ of a simple graph $G$ and an arbitrary small target graph $H$, $\wt O\left(\text{min}\left\{ m, \frac{m^{\rho(H)}}{\#H} \right\}\right)$ $\uMv$ queries suffice to obtain a $(1\pm \varepsilon)$-approximation to the number of occurrences of $H$ in $G$ with high probability.
\end{proposition}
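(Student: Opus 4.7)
The plan is to combine the black-box subgraph counting result of Assadi et al.~\cite{assadi2018simple} with the simulation from \lemref{local} in a direct way. First, I would invoke the theorem from \cite{assadi2018simple}, which states that using the four standard graph queries (degree, neighbor, edge existence, and edge sampling), one can produce a $(1\pm\varepsilon)$-approximation to the number of occurrences $\#H$ of any small graph $H$ in $G$ with high probability, using $\wt{O}(\min\{m, m^{\rho(H)}/\#H\})$ such queries, where $\rho(H)$ is the fractional edge-cover of $H$.

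Second, I would apply \lemref{local} as a black-box simulator: each of the four standard graph queries can be implemented with $O(\log n)$ $\uMv$ queries over $\R$. Since the adjacency matrix $\M$ is fixed throughout the subgraph counting algorithm, this simulation is consistent across all (possibly adaptive) calls made by the Assadi et al.~algorithm. Running their algorithm on $G$, but replacing each standard graph query with its $O(\log n)$-query $\uMv$ simulation, yields an algorithm in the $\uMv$ model whose output distribution is identical to that of the original.

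Third, I would bound the query cost. Multiplying the $\wt{O}(\min\{m, m^{\rho(H)}/\#H\})$ bound on the number of standard graph queries by the $O(\log n)$ overhead per simulated query gives a total $\uMv$ query count of $O(\log n)\cdot \wt{O}(\min\{m, m^{\rho(H)}/\#H\})$, which is absorbed into the $\wt{O}(\cdot)$ notation since by convention $\wt{O}(\cdot)$ suppresses $\polylog(n)$ factors (as well as $\varepsilon$ and $|V(H)|$ terms). The success probability is inherited directly from \cite{assadi2018simple}, since \lemref{local} simulates the graph queries exactly (with probability one, in the case of the deterministic queries, and with the correct sampling distribution for edge-sample).

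There is no real obstacle here: the proposition is essentially a two-line corollary of \lemref{local} and the theorem of Assadi et al. The only item that warrants a sentence of care is ensuring that the edge-sample simulation in \lemref{local} produces a uniformly random edge (which it does, by the recursive subsampling argument with the right probabilities at each step), so that the guarantees of the Assadi et al.~algorithm remain valid under the simulation.
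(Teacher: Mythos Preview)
Your proposal is correct and matches the paper's own argument essentially line for line: the paper simply invokes Assadi et al.'s bound on standard graph queries and then applies \lemref{local} to absorb the $O(\log n)$ simulation overhead into the $\wt O(\cdot)$ notation. Your extra sentence about the edge-sample simulation producing a uniform edge is a reasonable clarification but not strictly needed beyond what \lemref{local} already guarantees.
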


We briefly compare this to work on independent set queries~\cite{bbgm-teuti-19, bbgm-heups-19, dlm-acssw-20}. \propref{counting-subgraphs} achieves a general result for $\uMv$ queries, whereas estimating triangles or other subgraphs with bipartite independent set queries is an open question. Moreover, estimating larger subgraphs seems to require higher-order queries (e.g., tripartite independent set queries). This suggests that, as expected, $\uMv$ queries may be more powerful for a variety of problems.

\section{Conclusion}
\seclab{conclusion}

In this paper, we undertook an exploratory study of a new query model that considers querying a matrix through vector-matrix-vector queries. We provided new algorithms and lower bounds for problems spanning three domains: linear algebra, statistics, and graphs. For many of our results, we showed nearly matching bounds on the query complexity, sometimes up to logarithmic factors. We also demonstrated that many previously studied queries can be viewed as special cases or variants of the $\uMv$ model, and therefore, $\uMv$ queries provide a unified way to study the query complexity of various graph and matrix problems.

In terms of open questions, an interesting direction would be to identify cases where $\uMv$ queries are much more efficient than previously studied models. Some options include: determining the minimum cut more efficiently than cut queries~\cite{rubinstein2018computing, mukhopadhyay2019weighted} or estimating subgraph counts (e.g., triangles) more efficiently than local graph queries~\cite{assadi2017estimating, elrs-actst-17, s-ssaat-15}. Another direction is to identify more problems in linear algebra or statistics where a sublinear or even constant number of $\uMv$ queries suffice. It could also be interesting to study the generalization of our model to $k$-linear forms (i.e., querying a $k$-tensor by specifying $k$ vectors), comparing against $k$-partite independent set queries for counting $k$-cliques~\cite{bbgm-teuti-19, bbgm-heups-19, dlm-acssw-20}.

\paragraph{Acknowledgements} D. Woodruff would like to thank support in part by the Office of Naval Research (ONR) grant N00014-18-1-2562.

\bibliographystyle{alpha}%
\bibliography{edge_est}

\end{document}